\newtheorem{theorem}{Theorem}[section]
\newtheorem{definition}[theorem]{Definition}
\newtheorem{lemma}[theorem]{Lemma}
\newtheorem{corollary}[theorem]{Corollary}
\newtheorem{proposition}[theorem]{Proposition}
\def\GST{\mathrm{GST}}
\long\def\com#1{}
\begin{document}

\title{GRANDPA: a Byzantine Finality Gadget}
\author{Alistair Stewart \\ {\tt stewart.al@gmail.com} \and Eleftherios Kokoris-Kogia \\ {\tt eleftherios.kokoriskogias@epfl.ch}}
\date{June 30, 2020}
\maketitle

\begin{abstract}
Classic Byzantine fault-tolerant consensus protocols forfeit liveness in the face of asynchrony in order to preserve safety, whereas most deployed blockchain protocols forfeit safety in order to remain live. 
In this work, we achieve the best of both worlds by proposing a novel abstractions called the \emph{finality gadget}.
A finality gadget allows for transactions to always optimistically commit but informs the clients that these transactions might be unsafe. As a result, a blockchain can execute transactions optimistically and only commit them after they have been sufficiently and provably audited.
In this work, we formally model the finality gadget abstraction, prove that it is impossible to solve it deterministically in full asynchrony (even though it is stronger than consensus) and provide a partially synchronous protocol which is currently securing a major blockchain. This way we show that the protocol designer can decouple safety and liveness in order to speed up recovery from failures. We believe that there can be other types of finality gadgets that provide weaker safety (e.g., probabilistic) in order to gain more efficiency and this can depend on the probability that the network is not in synchrony.
\end{abstract}
\section{Introduction}

Bitcoin~\cite{nakamoto08bitcoin} and its descendants~\cite{wood14ethereum,sasson2014zerocash} are cryptocurrencies that provide 
secure automated value exchange without the need of a central managing authority.
Instead a decentralized consensus protocol maintains a distributed public ledger
known as the \textit{blockchain}. To be able to rely on public ledger one needs to know that it has reached consensus on a certain block, i.e., when a block will not be reverted anymore, which we refer to as reaching finality.  
One of the challenges of Nakomoto-like consensus protocols is that they only satisfy eventual consensus, which only guarantees that an ever growing prefix of the chain will be agreed upon by all participants forever onward. The eventual consensus process generally takes tens of minutes and it only gives probabilistic guarantees (for a certain block number at a certain point in time). 

Unfortunately these guarantees only hold if the underlying network is well-connected and the client is able to find an uncensored source of information, two assumptions that
do not hold in adversarial environments~\cite{apostolaki16hijacking, gervais15tampering, heilman15eclipse}.
The underlying problem which enables these attacks is that first generation blockchain protocols do not consider finality (i.e., when will a block never be reverted) as a first class property, prioritising liveness instead.

An alternative to probabilistic finality is having \emph{provable finality} where anyone can be convinced of the finality of a block, regardless of being a consensus participants or actively following the network.
New generation protocols~\cite{kokoris16enhancing,decker16bitcoin,pass16hybrid} propose the complete opposite. They propose every block to be finalized one by one and to forfeit liveness when finality is not readily achievable. This gives provable finality immediately.
Unfortunately, these types of protocols inherit the shortcoming of classic consensus protocol on losing performance when many nodes are required to participate. Hence, they need to put a limit on the number of consensus participants which might lead to centralization. 


In this work we show the that the middle ground also merits exploration. The approach that we will take is similar to the approach that Ethereum plans to take with Casper the Friendly Finality Gadget (Casper FFG)[2].
We introduces and formalize the idea of lazy finality which is encapsulated in the abstraction of a \emph{finality gadget.} Separating the liveness of the consensus protocol from the finality of the blocks. This approach has three concrete benefits for the overlying blockchain protocol. 


A first benefit is that since consensus is not tied to liveness of the chain we can have optimistic execution.  As a result, the chain can grow before it is certain that blocks are valid. Later on, we can finalize blocks when we are sure they are correct i.e., all verification information is available. 
A second benefit is that we can make some (unsafe) progress when the network is unstable. This enables a fast recovery process when the network heals. Similarly, we can make progress in chain growth even when finlzation is slow, e.g., when we have many particiapnts thus promoting decentenralization. 



Third, a finality gadget can be deployed gradually and light clients can choose to consult it or follow the longest chain rule and ignore it, enabling light client heterogeneity. 
The light client that trust the gadget do not need to have the full chain or actively listen to the network. 
This can in turn enable scalability~\cite{avarikioti19divide} in an ecosystem of multiple chains (weather sharding~\cite{kokoris17omniledger,al18chainspace,androulaki18channels} or heterogeneous~\cite{zamyatin19sok}), where no single party receives or stores all the data in the system. 


In short, we fromalize the abstraction of a \emph{finality gadget} that runs along any block production mechanism (e.g., Nakamoto consensus) providing provable finality guarantees. We show that it is impossible to satisfy its properties with a deterministic asynchronous protocol. To circumvent this impossibility result, we introduce the GRANDPA finality gadget that works in a partially synchronous network model, in the presence of up to $1/3$ Byzantine actors. 

The combination of GRANDPA with a classic block production mechanism like GHOST~\cite{lewenberg15inclusive} results in the existing deployment of the polkadot network \footnote{\url{https://polkadot.network}} which provides fast finality under good network conditions and protects the clients without compromising the liveness when under attack. The implementation of GRANDPA is available on github \footnote{See \url{https://github.com/paritytech/finality-grandpa} and \url{https://github.com/paritytech/substrate/tree/master/client/finality-grandpa}}.

In summary we make the following contributions:
\begin{itemize}

\item Introduce the idea of lazy finality and instantiate it through a finality gadget abstraction

\item Prove that BFG is impossible in asynchrony and present GRANDPA

\end{itemize}



\section{Model, Definitions, and Impossibilities}

We want to formalise the notion of finality gadget to be a sub-protocol that can be deployed along any protocol with
eventual consensus and probabilistic finality and enhancing such protocol with provable finality.
To achieve this, we need to incorporate into the classic definition of Byzantine agreement 
the fact that we additionally have access to a protocol that would achieve eventual consensus if we did not affect it.

\subsection{Byzantine Agreement with a Consistency Oracle}
Consider a typical definition of a multi-values Byzantine agreement: 
We have a set of participants $V$, the majority of whom obey the protocol, but a constant fraction may be Byzantine, meaning they behave arbitrarily, e.g. provide false or inconsistent information or randomly go offline when they ought to be online.

\begin{definition} A protocol for {\em multi-valued Byzantine agreement} has a set of values $S$ and a set of voters $V$, a constant fraction of which may be Byzantine, for which each voter $v \in V$ starts with an initial value $s_v \in S$ and, in the end, decides a final value $f_v \in S$ such that the following holds:

\begin{itemize}
\item {\bf Agreement}: All honest voters decide the same value for $f_v$
\item {\bf Termination}: All honest voters eventually decide a value
\item {\bf Validity}: If all honest voters have the same initial value, then they all decide that value
\end{itemize}

\end{definition}

We can change this definition to assume that instead of having an initial value, all voters have access to an external protocol, an oracle for values, that achieves eventual consensus in that it returns the same value to all voters when called after some unknown time.

\begin{definition}
We say an oracle $A$ in a protocol is {\em eventually consistent} if it returns the same value to all participants after some unspecified time.
\end{definition}

\begin{definition} A protocol for the {\em multi-valued Byzantine finality gadget problem} has a set of values $S$, a set of voters $V$, a constant fraction of which may be Byzantine, for which each voter $v \in V$ has access to an eventually consistent oracle $A$ and, in the end, each voter decides a final value $f_v \in S$ such that the following holds:

\begin{itemize}
\item {\bf Agreement:} All honest voters decide the same value for $f_v$
\item {\bf Termination:} All honest voters eventually decide a value
\item {\bf Validity:} All honest voters decide a value that $A$ returned to some honest voter sometime.
\end{itemize}

\end{definition}

\paragraph{Impossibility of Deterministic Agreement with an Oracle.}\label{ssec:impossibility}
For the binary case, i.e. when $|S|=2$, the Byzantine finality gadget problem is reducible to Byzantine agreement. This does not hold for $|S| > 2$, because the definition of validity is stronger in our protocol. Note that it is impossible for multi-valued Byzantine agreement to make the validity condition require that we decide an initial value of some honest voter and tolerate more than a $1/|S|$ fraction of faults, since we may have a $1/|S|$ fraction of voters reporting each initial value and Byzantine voters can act honestly enough not to be detectable. For finality gadgets, this stronger validity condition is possible. A natural question is then weather the celebrated FLP~\cite{flp} impossibility holds for our stronger requirements.
Next, we show that an asynchronous, deterministic binary finality gadget is impossible, even with one fault. 
This means that the extra information voters have here, that $A$ will eventually agree for all voters, is not enough to make this possible.

\paragraph{Proof:}
The asynchronous binary fault tolerant agreement problem is as follows:

We have  number of voters which each have an initial $v_i$ in $\{0,1\}$

We may have one or more faulty nodes, which here means going offline at some point. Nodes have asynchronous communication - so any message arrives but we have no guarantee when it will.
The goal is to have all non-faulty nodes output the same $v$, which must be $0$ if all inputs $v_i$ are $0$ and $1$ if all are $1$.

Fischer, Lynch and Paterson\cite{flp} showed that this is impossible if there is one faulty node.

The binary fault-safe finality gadget problem is similar, except now there is an oracle $A$ that any node can call at any time with the following properties:

either $A$ always outputs $x$ in $\{0,1\}$ to all nodes at all times
or else there is an $x$ in $\{0,1\}$ and
for each node $i$, there is a $T_i$ such that when $i$ calls $A$ before $T_i$. it gives $x$ but if it calls $A$ after $T_i$, it returns not $x$ .

and we want that if A never switches, then all non-faulty nodes output x. If A does switch then all non-faulty nodes should output the same thing, but it can be 0 or 1. 

Then this is also impossible, even for one faulty node, which just goes offline. Note that this generalises Byzantine agreement, since if we could each node $i$ could call $A$ once at the start and use the output as $v_i$. (For the multi-valued case, we will define the problem so that this reduction does not hold.)

\begin{proof}[Proof sketch] We follow the notation of  \cite{flp} and assume for a contradiction that we use a correct protocol. 
Let $r$ be a run of the protocol where $A$ gives $0$ all the time.
Then by correctness $r$ decides $0$. Now we consider what can happen when $A$ switches to $1$ after each configuration in $r$. If it switches to $1$ at the start, then the protocol decides $1$.
If we switch to $1$ when all node have already decided $0$, then we decide $0$.

We claim that some configuration in the run $r$, where there are two runs from it where $A$ is always $1$ that decide $0$ and $1$. We call such states $1$-bivalent.
To see this, assume for a contradiction that $r$ contains no such configurations. Then there are successive configurations $C$,$C'$ such that if $A$ return $1$ in the future from $C$ then we always decide $0$ but from $C'$, we always decide $1$.
Let events be $(p,m,x)$ where node (processor/voter) $p$ receives message $m$ (which may be null) and executes some code where any calls to A return $x$ in $\{0,1\}$, then sends some messages. 
Then there is some event $(p,m,0)$ that when applied to $C$ gives $C'$. Now suppose that $p$ goes offline at $C$, then if $A$ always returns $1$ afterwards, then we still decide $1$. Thus there is a run $r'$ that starts at $C$ where $p$ takes no steps, $A$ always returns $1$ and all other nodes still output $1$.
But since $p$ takes no steps in $r'$, we can apply $r'$ after $(p, m, 0)$ and so we have that $C'$ has a run where $A$ always returns $1$ but decides $1$, which is a contradiction.

Now let $C$ be a $1$-bivalent configuration. We can follow the FLP proof to show that there is a run from $C$ for which $A$ always returns $1$, all messages are delivered but all configurations are 1-bivalent and so the protocol never decides. This completes the proof by contradiction that there is no correct protocol.
\end{proof}

\subsection{Definition of a Finality Gadget}

In this section we show how to extend the one-shot agreement to agreeing on a chain of blocks. One difficulty in formalising the problem is that the block production mechanism cannot be entirely separate from the finality gadget. In order to finalise new blocks, we must first build on the chain we have already finalised. So at a minimum, the block production mechanism needs to recognise which blocks the finality gadget has finalised. We will also allow the block production mechanism to interact with the state of the finality gadget in other ways.

We want the finality gadget to work with the most general block production mechanisms as possible. Thus we need a condition that combines the property of eventual consensus and this requirement to build on the last finalised block, but is otherwise not too restrictive.
We assume a kind of conditional eventual consensus.
If we keep building on our last finalised block $B$ and don't finalise any new blocks, then eventually we have consensus on a longer chain than just $B$, which the finality gadget can use to finalise another block.
We also want a protocol that does not terminate, but instead keeps on finalising more blocks. 

We assume that there is a block production protocol $P$ that runs at the same time as the finality gadget protocol $G$. Actors who are participants in both protocols may behave differently in $P$ depending on what happened in $G$.
However in the reverse direction, the only way that an honest voter $v$'s behaviour in $G$ is affected by $P$ is through a voting rule, a function $A(v,s_v,B)$ that depends on $v$ and its state $s_v$ and takes a block $B$ and returns a block $B'$ at the head of a chain including $B$.

We say that the system $G$, $P$, and $A$ achieves {\em conditional eventual consensus}, if $G$ has finalised a block $B$, then eventually, either $G$ will finalise some descendant of $B$ or else all the chains with head $A_{v,s_v}(B)$ for all voters $v$ at all future states $s_v$ will contain the same descendant $B'$ of $B$.

\begin{definition} \label{def:finality-gadget} 
Let $F$ be a protocol  with a set of voters $V$, a constant fraction of which may be Byzantine.
We say that $F$ solves {\em blockchain Byzantine finality gadget problem} if for every block production protocol $P$ and voting rule $A$ we have the following


\begin{itemize}
\item{\bf Safety:} All honest voters finalise the same block at each block number.
\item{\bf Liveness:} If the system $F, G, A$ achieves conditional eventual consensus, then all honest voters keep finalising blocks.
\item{\bf Validity:} If an honest voter finalises a block $B$ then that block was seen in the best chain observed by some honest voter containing some previously finalised ancestor of $B$,
\end{itemize}

\end{definition}


As an example, we could assume $F$ uses proof of work to build on the longest chain and includes the last block $G$ finalised. Then we take $A(v,s_v,B)$ as being the longest chain which includes $B$ and which $v$ sees in state $s_v$. It is well-known \cite{nakamoto08bitcoin} that longest chain with proof of work achieves eventual consensus under the right assumptions and similar arguments show that in this case we have conditional eventual consensus.
As long as we do not change the chain we are building on by finalising another block, we will eventually agree on some prefix longer than the last finalised block.
Thus, any finality gadget that satisfies Definition \ref{def:finality-gadget} will work in this system so that all honest voters finalise an increasingly long common chain.
Thanks to the abstraction above, we can switch $F$ for one of many possible alternative consensus algorithms and $G$ will still work.

\com{

\subsection{Our results}

\subsection{Related Work}

\subsubsection{Comparison with Casper}

The concept of finality gadget was introduced in Casper the friendly finality gadget and this remains the finality gadget which is most similar to ours. So it makes sense to compare these. However first, we should mention the other protocols that are also called Casper.

The first Casper was Casper TFG. Casper CBC\cite{CasperCBC} gives a recent and clearly specified version of this protocol. It's fork choice rule uses the GHOST selection rule on votes.
In Casper TFG, votes are blocks, but they are counted by participants (proposers and validators) like votes, which differs from how GHOST would be used with proof of work. It also has a flexible way of subjectively finalising blocks based on graphs of votes. 

In Casper FFG\cite{CasperFFG}, validators vote on links between checkpoints, which occur at block numbers that are multiples of, say, 50. If there are 2/3 votes for one block at consecutive checkpoints, then we can finalise a chain of blocks up to the first checkpoint.

Epochless Casper, 

Casper...

There are two main differences between Casper FFG and GRANDPA. One is that in GRANDPA, different voters can cast votes simultaneously for blocks at different heights. This is achieved by borrowing the concept of GHOST on votes from Casper TFG and applying it in a more traditional Byzantine agreement protocol.

The other main difference is how the finality gadget affects the fork-choice rule of the underlying block production mechanism. In GRANDPA, by default we will assume that this is only affected by having to include any finalised blocks. 
Casper FFG \cite{CasperFFG} does not specify a fork-choice rule, but it requires that we build on justified blocks for liveness. Later specifications of Casper use the GHOST rule on votes for fork-choice.

Only depending on finalised blocks gives a clearer separation between the block production mechanism and finality gadget. It may therefore be easier to adapt to other types of protocol that achieve eventual consensus—and there have been many diverse protocols that do this developed in the last few years.
It also makes it far easier to prove liveness properties.
If the finality gadget has not finalised anything and so does not interfere, then the underlying mechanism should reach eventual consensus, which should be enough for the finality gadget to finalise whatever we have consensus on.

On the other hand, while building on the longest chain in the absence of a finality gadget to maximize block rewards may be rational if everyone else does, this is not always the case for building on the longest chain including the last finalised block.
This is because it may be likely that a different chain is going to be finalised, in which case the rational thing to do might be to build on that. The GHOST on votes fork choice rule of ? and ? may be more rational.
It is not clear that it is, nor is it clear how to prove liveness for such a rule. Further research may be needed to show that there is a fork choice rule which is rational and leads to liveness for the finality gadget.

}
\subsection{Preliminaries} \label{sec:prelims}

\paragraph{Network model}: We will be using the partially synchronous network model introduced by \cite{DLS} and in particular the gossip network variant used in \cite{Tendermint}.
We assume that any message sent or received by an honest participant reaches all honest participants within time $T$, but possibly only after some Global Synchronisation Time $\GST$.
Concretely, any message sent or received by some honest participant at time $t$ is received by all honest participants by time $\GST+T$ at the latest.

\paragraph{Voters:} \com{We will want to change the set of participants who actively agree sometimes. 
To model this, we have a large set of participants who follow messages.}
For each voting step, there is a set of $n$ voters.
We will frequently need to assume that for each such step, at most  $f < n/3$ voters are Byzantine.
We need $n-f$ of voters to agree on finality. Whether or not block producers ever vote, they will need to be participants who track the state of the protocol.

\paragraph{Votes:}A vote is a block hash, together with some metadata such as round number and the type of vote, such as {\em prevote} or {\em precommit}, all signed with a voter's private key.

\paragraph{Rounds:}Each participant has their own idea of what is the current round number. Every prevote and precommit has an associated round number. Honest voters only vote once (for each type of vote) in each round and do not vote in earlier rounds after later ones.
Participants need to keep track of which block they see as currently being the latest finalised block and an estimate of which block could have been finalised in the last round.

For block $B$, we write $\mathrm{chain}(B)$ for the chain whose head is $B$. The block number, $n(B)$ of a block $B$ is the length of $\mathrm{chain}(B)$.
For blocks $B'$ and $B$, we say $B$ is later than $B'$ if it has a higher block number.
We write $B > B'$ or that $B$ is descendant of $B'$ for $B$, $B'$ appearing in the same blockchain with $B'$ later i.e. $B' \in \mathrm{chain}(B)$ with $n(B) > n(B')$.
$B \geq B'$ and $B \leq B'$ are similar except allowing $B = B'$.
We write $B \sim B'$ or $B$ and $B'$ are on the same chain if $B<B'$, $B=B'$ or $B> B'$; and $B \nsim B'$ or $B$ and $B'$ are not on the same chain if there is no such chain.

Blocks are ordered as a tree with the genesis block as root. So any two blocks have a common ancestor but two blocks not on the same chain do not have a common descendant.
A vote $v$ for a block $B$ by a voter $V$ is a message signed by $V$ containing the blockhash of $B$ and meta-information like the round numbers and the type of vote. 

A voter equivocates in a set of votes $S$ if they have cast multiple different votes in $S$. We call a set $S$ of votes safe if the number of voters who equivocate in $S$ is at most $f$. We say that $S$ has a supermajority for a block $B$ if the set of voters who either have a vote for blocks $\geq B$ or equivocate in $S$ has size at least $(n+f+1)/2$.  We count equivocations as votes for everything so that observing a vote is monotonic, meaning that if $S \subset T$ then if $S$ has a supermajority for $B$ so does $T$, while being able to ignore yet more equivocating votes from an equivocating voter.

For our finality gadget (GRANDPA) we use the ghost~\cite{lewenberg15inclusive} eventual consensus algorithm as $F$.
The $2/3$-GHOST function $g(S)$ takes a set $S$ of votes and returns the block $B$ with highest block number such that $S$ has a supermajority for $B$.
If there is no such block, then it returns `nil`. \com{(if $f \neq \lfloor (n-1)/3 \rfloor$, then this is a misnomer and we may change the name of the function accordingly.)}
Note that, if $S$ is safe, then we can compute $g(S)$ by starting at the genesis block and iteratively looking for a child of our current block with a supermajority, which must be unique if it exists. Thus we have:
\begin{lemma} \label{lem:ghost-monotonicity}
Let $T$ be a safe set of votes. Then
\begin{enumerate}
\item The above definition uniquely defines $g(T)$
\item If $S \subseteq T$ has $g(S) \neq$ nil, then $g(S) \leq g(T)$.
\item If $S_i \subseteq T$ for $1 \leq i \leq n$ then all non-nil $g(S_i)$ are on a single chain with head $g(T)$.
\end{enumerate}

\end{lemma}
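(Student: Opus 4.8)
The plan is to isolate the one combinatorial fact that drives all three parts and then read off each part as bookkeeping. The fact is a \emph{quorum-intersection} claim: if $U$ is safe and has a supermajority for both $B$ and $B'$, then $B \sim B'$. To prove it, suppose instead $B \nsim B'$, and let $A$ (resp.\ $A'$) be the set of voters who cast some vote $\geq B$ (resp.\ $\geq B'$) in $U$ or equivocate in $U$; by hypothesis $|A|, |A'| \geq (n+f+1)/2$. Since both are subsets of the $n$ voters, $|A \cap A'| \geq |A| + |A'| - n \geq f+1$, so as at most $f$ voters equivocate in $U$ there is a non-equivocating $w \in A \cap A'$. Then $w$ cast a single distinct vote, for a block that is simultaneously $\geq B$ and $\geq B'$, which puts $B$ and $B'$ on a common chain, contradicting $B \nsim B'$. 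Alongside this I would record the routine monotonicities: (a) if $B' \leq B$ and $U$ has a supermajority for $B$, it has one for $B'$ (a vote $\geq B$ is $\geq B'$, equivocators unchanged); (b) if $S \subseteq T$ and $S$ has a supermajority for $B$, so does $T$ (the relevant voter set only grows, using that equivocating in $S$ implies equivocating in $T$); and (c) a subset of a safe set is safe, for the same reason.

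For part 1, the quorum-intersection claim says the blocks of $T$ with a supermajority all lie on one chain, and (a) says this collection is an initial segment of a single path from the genesis block; since only finitely many blocks occur in votes of $T$, the segment is finite and so has a unique block of maximal block number, which is by definition $g(T)$. The iterative procedure walks exactly this segment: at a current block $C$ on it, two distinct children of $C$ cannot both have a supermajority (distinct children are not on a common chain, contradicting the claim), so the next step is unique, and the walk terminates precisely at the maximal element. Part 2 is then immediate: for $S \subseteq T$, (b) gives that $g(S)$ has a supermajority in $T$, so $n(g(T)) \geq n(g(S))$ by maximality, and the quorum-intersection claim in the safe set $T$ puts $g(S)$ and $g(T)$ on one chain, whence $g(S) \leq g(T)$. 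Part 3 follows by applying part 2 to each $S_i$ (each safe by (c), so each $g(S_i)$ is well defined by part 1): whenever $g(S_i) \neq \text{nil}$ we get $g(S_i) \leq g(T)$, so every such $g(S_i)$ lies on $\mathrm{chain}(g(T))$.

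The only real content is the quorum-intersection claim, and the point to be careful about there is the treatment of equivocators: because they are counted toward every supermajority, the $\geq f+1$ voters in the intersection are not automatically useful, and one must spend the safety hypothesis ($\leq f$ equivocators) to extract an honest voter whose single vote forces $B \sim B'$. A minor loose end in part 1 is the convention at the genesis block when too few voters have voted, but this does not affect the uniqueness of $g(T)$ once $g(T) \neq \text{nil}$.
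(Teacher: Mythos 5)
Your proof is correct and follows the same route the paper only sketches (the paper offers just the one-line remark that $g(T)$ can be computed by walking from genesis, since at most one child can have a supermajority). Your explicit quorum-intersection argument --- $|A\cap A'| \geq f+1$ forces a non-equivocating voter whose single vote puts $B$ and $B'$ on one chain --- is exactly the fact that justifies that remark, and the rest is the same bookkeeping.
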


Note that we can easily update $g(S)$ to $g(S \cup \{v\})$, by checking if any child of $g(S)$ now has a supermajority.
The third rule tells us that even if participants see different subsets of the votes cast in a given voting round, this rule may give them different blocks but all such blocks are in the same chain under this assumption. 

Next, we define a notion of possibility to have a supermajority which says that if the set of all votes in a vote $T$ is safe and some participant observes a subset $S \subseteq T$ that has a supermajority for a block $B$ then all participants who see some other subset $S' \subseteq T$ still see that it is possible for $S$ to have a supermajority for $B$. We need a definition that extends to unsafe sets.
We say that it is {\em impossible} for a set $S$ to have a supermajority for $B$ if at least $(n+f+1)/2$ voters either vote for a block $\not \geq B$ or equivocate in $S$. Otherwise it is {\em possible} for $S$ to have a supermajority for $B$.

Note that if $S$ is safe, it is possible for $S$ to have a supermajority for $B$ if and only if there is a safe $T \supseteq S$ that has a supermajority for $B$, which can be constructed by adding a vote from $B$ for all voters without votes in $S$ and enough voters who already have votes in $S$ to bring the number of equivocations up to $f$.

We say that it is {\em impossible} for any child of $B$ to have a supermajority in $S$ if $S$ has votes from at least $2f+1$ voters and it is impossible for $S$ to have a supermajority for each child of $B$ appearing on the chain of any vote in $S$.
Again, provided $S$ is safe, this holds if and only if for any possible child of $B$, there is no safe $T \subseteq S$ that has a supermajority for that child.
\com{
Note that it is possible for an unsafe $S$ to both have a supermajority for $S$ and for it to be impossible to have such a supermajority under these definitions, as we regard such sets as impossible anyway.
}
\begin{lemma} \label{lem:impossible}
\begin{itemize}
\item[(i)] If $B' \geq B$ and it is impossible for $S$ to have a supermajority for $B$, then it is impossible for $S$ to have a supermajority for $B'$.
\item[(ii)] If $S \subseteq T$ and it is impossible for $S$ to have a supermajority for $B$, then it is impossible for $T$ to have a supermajority for $B$.
\item[(iii)] If $g(S)$ exists and $B \nsim g(S)$ then it is impossible for $S$ to have a supermajority for $B$.
\end{itemize}
\end{lemma}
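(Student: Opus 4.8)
The plan is to argue all three parts directly from the definition of impossibility, reducing each to a set inclusion between the relevant ``blocking'' sets. For a set $R$ of votes and a block $B$, write $I_R(B)$ for the set of voters who either cast some vote in $R$ for a block $\not\geq B$ or equivocate in $R$; by definition it is impossible for $R$ to have a supermajority for $B$ exactly when $|I_R(B)| \geq (n+f+1)/2$. So in each part I would exhibit an inclusion of one such set into another and then compare cardinalities. The only thing that needs consistent care is the convention that an equivocator counts as voting for ``everything'' and hence always lies in $I_R(B)$ regardless of $B$, and that equivocator status is monotone under adding votes.

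For (i), fix $B' \geq B$; I would show $I_S(B) \subseteq I_S(B')$. An equivocator of $S$ lies in both sets, so consider a non-equivocating $v \in I_S(B)$ with vote $C \not\geq B$. If we had $C \geq B'$, then $C \geq B' \geq B$ would force $C \geq B$ by transitivity along a single chain, a contradiction; hence $C \not\geq B'$ and $v \in I_S(B')$. Thus $|I_S(B')| \geq |I_S(B)| \geq (n+f+1)/2$. For (ii), with $S \subseteq T$ I would show $I_S(B) \subseteq I_T(B)$: a voter equivocating in $S$ still has those distinct votes in $T$, hence equivocates in $T$; a voter with a vote for some $C \not\geq B$ in $S$ still has that vote in $T$, and whether or not it has acquired a further distinct vote in $T$ it still lies in $I_T(B)$ (as an equivocator in the first case, as a voter for $C \not\geq B$ in the second). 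Hence $|I_T(B)| \geq |I_S(B)| \geq (n+f+1)/2$.

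For (iii), I would use that ``$g(S)$ exists'' means $S$ has a supermajority for $g(S)$, i.e.\ the set $M$ of voters who either vote $\geq g(S)$ or equivocate in $S$ satisfies $|M| \geq (n+f+1)/2$, and show $M \subseteq I_S(B)$. Equivocators lie in both. For a non-equivocating $v \in M$ with vote $C \geq g(S)$: if also $C \geq B$, then $C$ would be a common descendant of $B$ and $g(S)$, so $B$ and $g(S)$ would lie on one chain, contradicting $B \nsim g(S)$ (here using that blocks form a tree, so two incomparable blocks have no common descendant); hence $C \not\geq B$ and $v \in I_S(B)$. Therefore $|I_S(B)| \geq |M| \geq (n+f+1)/2$.

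I do not expect a serious obstacle: all three are short monotonicity arguments. The closest thing to a pitfall, and the point I would be most careful about, is uniformly treating equivocators as members of every $I$-set (and of $M$), since it is exactly this that makes the inclusion in (ii) survive a non-equivocating voter becoming an equivocator, and makes the inclusion in (iii) work; the tree property of blocks is the only genuinely structural input, and it enters only in part (iii).
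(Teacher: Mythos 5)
Your proof is correct. The paper states Lemma \ref{lem:impossible} without proof, and your argument — formalizing impossibility via the blocking sets $I_S(B)$ and reducing each part to a set inclusion ($I_S(B) \subseteq I_S(B')$ for (i), $I_S(B) \subseteq I_T(B)$ for (ii), and $M \subseteq I_S(B)$ via the no-common-descendant property of the block tree for (iii)), with equivocators uniformly counted in every blocking set — is exactly the intended elementary verification.
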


\section{Finality Gadget Protocols} \label{sec:finality}

\com{
To discover up with a solution to the blockchain Byzantine finality gadget problem, we will typically look at various Byzantine agreement protocols and use those to find protocols for the multi-valued Byzantine finality gadget problem. 
Agreement protocols with appropriate properties can be used to find protocols for the blockchain Byzantine finality gadget problem by considering running them in parallel at every block number.
If the one block protocol has the right properties then they will agree on blocks consistently, so if we finalise a block then we also finalise its ancestors and we can come up with a succinct protocol.

For example, suppose we have a one block protocol that calls for a vote on blocks which requires a participant to observe a supermajority, say votes from  $2/3$ of voters, for some block, or else the participant observes that the vote is undecided. Now imagine running this vote in parallel for every block number and have any honest voter vote for blocks from a particular chain.
Byzantine voters may vote more than once, but if we count a vote for a block as a vote for each ancestor of the block in the vote for the instance of the one block protocol with its number, then Byzantine voters must also vote for chains, though they can vote for multiple chains.
If we do this, then we see that if a block has a supermajority in a vote, then so does all its ancestors in their votes. Thus the blocks with a supermajority form a chain.
Furthermore, if only $1/3$ of voters equivocate then if a participant sees a subset of the votes for chains, then they must see a prefix of the chain of blocks for which all the votes have supermajorities. Intuitively, the protocol can agree on the prefix that $2/3$ of voters agree on using this. 

To ensure safety, each participant maintains an estimate $E_r$ of the last block that could have been finalised in a round $r$. This has the property that in future rounds it overestimates the block that could have been finalised so that in round $r$, the chain with head $E_{r-1}$ contains all blocks that could have been finalised.
Any honest voter only votes in round $r$ for chains containing their estimate $E_{r-1}$ and this guarantees that any block that could have been finalised in round $r-1$ will be finalised in round $r$.
}

In order to find a solution to the finality gadget protocol we look in
consensus protocols that solve the stronger problem as described in the previous section. The key idea for our solution is to inherit the safety properties of a consensus protocol, but use the underlying blockchain as the driving force of liveness. This results in a protocol which does not stop when for example the network is split. 
Instead, only the finalization stops, but the blocks keep getting created and propagated to everyone.
This means that when the conditions are safe again, the finality gadget only needs to finalize the head of the chain\footnote{Which the oracle will return quickly to a supermajority of miners.},
instead of having to transmit and run consensus on every block. 

\subsection{The GRANDPA Protocol}\label{sec:grandpa}
In this section, we give our solution to the Byzantine finality gadget problem, GRANDPA.  Our finality gadget works the partially synchronous setting, we also provide a fully asynchronous solution in Appendix~\ref{app:async}.

GRANDPA works in rounds, each round has a set of $3f+1$ eligible voters, $2f+1$ of which are assumed honest. Furthermore, we assume that each round has a participant designated as primary and all participants agree on the voter sets and primary. We will can either choose the primary pseudorandomly from or rotate through the voter set.
On a high-level, each round consists of a double-echo protocol after which every party waits in order to detect whether we can finalize a block in this round (this block does not need to be the immediate ancestor of the last finalized block, it might be far ahead from the last finalized block). If the round is unsuccessful, the parties simply move on to the next round with a new primary. When a good primary is selected, the oracle is consistent (returns the same value to all honest parties),
and the network is in synchrony (after $\GST$), then a new block will be finalized and it will transitively finalized all its ancestors.

More specifically, we let $V_{r,v}$ and $C_{r,v}$ be the sets of prevotes and precommits respectively received by $v$ from round $r$ at the current time.

We define $E_{r,v}$ to be $v$'s estimate of what might have been finalised in round $r$, given by the last block in the chain with head $g(V_{r,v})$ for which it is possible for $C_{r,r}$ to have a supermajority. Next we define a condition which will allow us to safely conclude that $E_{r,v} \geq B$ for all $B$ that might be finalised in round $r$:
If either $E_{r,v} < g(V_{r,v})$ or it is impossible for $C_{r,v}$ to have a supermajority for any children of $g(V_{r,v})$, then we say that {\em $v$ sees that round $r$ is completable}.  $E_{0,v}$ is the genesis block, assuming we start at $r=1$.  

In other words, a round $r$ is completable when our estimate chain $E_{r,v}$ contains everything that could have been finalised in round $r$, which makes it possible to begin the next round $r+1$.

We have a time bound $T$ that after $\GST$ suffices for all honest participants to communicate with each other.
Inside a round, the properties both of $E_{r,v}$ having a supermajority, meaning $E_{r,v} < g(V_{r,v})$, as well as of it being impossible to have a supermajority for some given block are monotone, so the property of being completable is monotone as well.
We therefore expect that, if anyone sees a round is completable, then everyone will see this within time $T$. Leaving a gap of $2T$ between steps is then enough to ensure that every party receives all honest votes before continuing.

\paragraph{Protocol Description.}
In round $r$ an honest participant $v$ does the following:

\noindent \fbox{\parbox{6.3in}{

\begin{enumerate}
\item A voter $v$ can start round $r > 1$ when round $r-1$ is completable and $v$ has cast votes in all previous rounds where they are a voter. Let $t_{r,v}$ be the time $v$ starts round $r$.

\item At time $t_{r,v}$, if $v$ is the primary of this round and has not finalised $E_{r-1,v}$ then they broadcast $E_{r-1,v}$. If they have finalised it, they can broadcast $E_{r-1,v}$ anyway (but do not need to).

\item If $v$ is a voter for the prevote of round $r$, $v$ waits until either it is at least time $t_{r,v}+2T$ or round $r$ is completable, then broadcasts a prevote.
They prevote for the head of the best chain containing $E_{r-1,v}$ unless we received a block $B$ from the primary and $g(V_{r-1,v}) \geq B > E_{r-1,v}$, in which case they use the best chain containing $B$ instead.

\item If $v$ is a voter for the precommit step in round $r$, then they wait until $g(V_{r,v}) \geq E_{r-1,v}$ and one of the following  conditions holds
\begin{itemize}
\item[(i)] it is at least time $t_{r,v}+4T$, 
\item[(ii)] round $r$ is completable or
\item[(iii)] it is impossible for $V_{r,v}$ to have a supermajority for any child of $g(V_{r,v})$,
\end{itemize}
and then broadcasts a precommit for $g(V_{r,v})$ {\em( (iii) is optional, we can get away with just (i) and (ii))}.

\end{enumerate}

}}

Note that $C_{r,v}$ and $V_{r,v}$ may change with time and also that $E_{r-1,v}$, which is a function of $V_{r-1,v}$ and $C_{r-1,v}$, can also change with time if $v$ sees more votes from the previous round.

\paragraph{Finalisation.}

If, for some round $r$, at any point after the precommit step of round $r$, we have that $B=g(C_{r,v})$ is later than our last finalised block and $V_{r,v}$ has a supermajority, then we finalise $B$.
We may also send a commit message for $B$ that consists of $B$ and a set of precommits for blocks $\geq B$ (ideally for $B$ itself if possible see "Alternatives to the last blockhash" below). 

To avoid spam, we only send commit messages for $B$ if we have not receive any valid commit messages for $B$ and its descendants and we wait some time chosen uniformly at random from $[0,1]$ seconds or so before broadcasting.
If we receive a valid commit message for $B$ for round $r$, then it contains enough precommits to finalise $B$ itself if we haven't already done so, so we'll finalise $B$ as long as we are past the precommit step of round $r$.

\com{
\subsection{Discussion}

\paragraph{Wait at the end of a round before precommitting.}

If the network  is badly behaved, then these steps may involve waiting an arbitrarily long time. When the network is well behaved (after the $\GST$ in our model), we should not be waiting. Indeed there is little point not waiting to receive $2f+1$ of voters' votes as we cannot finalise anything without them.
But if the gossip network is not perfect and some messages never arrive, then we may need to make voters asking other voters for votes from previous rounds.\com{ in a similar way to the challenge procedure, to avoid deadlock.}

In exchange for our design choice of waiting, we get the property that we do not need to pay attention to votes from before the previous round in order to vote correctly in this one. Without waiting, we could be in a situation where we might have finalised a block in some round r, but the network becomes unreliable for many rounds and gets few votes on time, in which case we need to remember the votes from round r to finalise the block later. 

\subsubsection{Using a Primary}

We only need the primary for liveness. 
We need some form of coordination to defeat the repeated vote splitting attack. The idea behind that attack is that if we are in a situation where almost 2/3 of voters vote for something an the rest vote for another, then the Byzantine voters can control when we see a supermajority for something. If they can carefully time this, they may be able to split the next vote. 
Without the primary, they could do this for prevotes, getting a supermajority for a block $B$ late, then split precommiher from being finalised like this even if the (unknown) fraction of Byzantine players is small.

When the network is well-behaved, an honest primary can defeat this attack by deciding how much we should agree on. We could also use a common coin for the same thing, where people would prevote for either the best chain containing $E_{r-1,v}$ or $g(V_{r-1,v})$ depending on the common coin.
With on-chain voting, it is possible that we could use probabilistic finality of the block production mechanism - that if we don't finalise a block and always build on the best chain containing the last finalised block then not only will the best chain eventually converge, but if a block is behind the head of the best chain, then with positive probability, it will eventually be in the best chain everyone sees.

In our setup, having a primary is the simplest option for this.
ts so we don't see that it is impossible for there to be a supermajority for $B$ until late. 
If $B$ is not the best block given the last finalised block but $B'$  with the same block number, they could stop eit

}

\section{ Analysis }

To analyse the performance of our finality gadget, we will need versions of our properties that appropriately depend on time: 

\begin{itemize}
\item{\bf Fast termination:} {\em If the last finalised block has number $n$ and, until another block is finalised, the best chain observed by all participants will include the same block with block number $n+1$, then a block with number $n+1$ will be finalised within time $T$.}
\item{\bf Recent validity:} {\em If an honest voter finalises a block $B$ then that block was seen in the best chain observed by some honest voter containing some previously finalised ancestor of $B$ more recently than time $T$ ago.}
\end{itemize}

Intuitively, fast termination implies that we finalise blocks fast as long as the block production mechanism achieves consensus fast whereas recent validity bounds the cost of starting to agree on something the block production mechanism's consensus later decides is not the best. In this case, we may waste time building on a chain that is never finalised so it is important to bound how long we do that.

These properties will typically only hold with high probability. In the asynchronous case, we would need to measure time in rounds of the protocol rather than seconds to make sense of these properties.  We are also interested in being able to remove and punish Byzantine voters, for which we will need:

\begin{itemize}
	\item{\bf Accountable Safety:} {\em If blocks on different chains are finalised, then we can identify at least $f+1$ Byzantine voters.}
\end{itemize}

\subsection{ Accountable Safety}

The first thing we want to show is asynchronous safety, assuming we have at most $f$ Byzantine voters. This follows from the property that if $v$ sees round $r$ as completable then any block $B$ with $E_{r,v} \not\leq B$ has that it is impossible for one of $C_{r,v}$ or $V_{r,v}$ to have a supermajority for $B$ and so $B$ was not finalised in round $r$. This ensures that all honest prevotes and precommits in round $r+1$ are for chains that include any blocks that could have been finalised in round $r$. With an induction, this is what ensures that we cannot finalise blocks on different chains. To show accountable safety, we need to turn this proof around to show the contrapositive, when we finalise different blocks , then there are $f+1$ Byzantine voters. If we make this proof constructive, then it gives us a challenge procedure, that can assign blame to such voters.

\begin{theorem} \label{thm:accountable} If the protocol finalises any two blocks $B,B'$ for which valid commit messages were sent, but which do not lie on the same chain, then there are at least $f+1$ Byzantine voters who all voted in a particular vote. Furthermore, there is a synchronous procedure to find some such set $X$ of $f+1$ Byzantine voters.
\end{theorem}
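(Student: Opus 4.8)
The plan is to prove the contrapositive of safety in a constructive way, tracking which rounds and which votes the blame falls on. Suppose blocks $B$ and $B'$ are finalised with $B \nsim B'$, via commit messages in rounds $r$ and $r'$ respectively; without loss of generality $r \le r'$. Finalisation of $B$ in round $r$ means there is a set $C_r$ of precommits forming a supermajority for $B$ (contained in the commit message), and likewise finalisation of $B'$ in round $r'$ gives a supermajority of precommits for $B'$. The key quantity to follow is the estimate $E_{r,v}$ and the completability of round $r$: I will argue that, if there were fewer than $f+1$ voters to blame, then every honest voter who started round $r+1$ did so with an estimate $E_r$ satisfying $E_r \ge B$ in the sense that $\mathrm{chain}(E_r)$ contains $B$, hence all honest prevotes and precommits in round $r+1$ are for chains through $B$; an induction on rounds from $r+1$ up to $r'$ then shows the round-$r'$ precommit supermajority cannot be for $B' \nsim B$ without $f+1$ equivocators or $f+1$ voters who precommitted in round $r'$ for a block off $\mathrm{chain}(B)$ having also prevoted inconsistently — in each case a concrete identifiable set $X$.

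**The core combinatorial step** is the single-round argument: if $C_r$ has a supermajority for $B$ (so at least $(n+f+1)/2$ voters precommitted for blocks $\ge B$ or equivocated) and some honest voter saw round $r$ completable with estimate $E_{r,v} \nleq B$, then by the definition of completability and Lemma~\ref{lem:impossible} it is impossible for $C_{r,v}$ to have a supermajority for $B$, i.e. at least $(n+f+1)/2$ voters precommitted for blocks $\ngeq B$ or equivocated in $C_{r,v}$. Intersecting the two sets of voters (one of size $\ge (n+f+1)/2$ from the commit message, one of size $\ge (n+f+1)/2$ from the completability witness), a counting argument with $n \le 3f+1$ forces an overlap of at least $f+1$ voters, each of whom either equivocated (voted both for something $\ge B$ and for something $\ngeq B$, or equivocated outright) and is therefore provably Byzantine. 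This gives the set $X$. When no honest voter saw round $r$ completable with a bad estimate, one has to chase the induction: a voter starting round $r+1$ did so because round $r$ was completable \emph{to them}, so their estimate $E_{r,v}$ is $\ge$ every block that could have been finalised in round $r$, in particular $\ge B$; since honest voters in round $r+1$ only prevote and precommit on chains containing $E_{r,v}$, the round-$(r+1)$ precommit supermajority for $g(C_{r+1})$ forces $g(C_{r+1}) \sim B$, and one repeats. The first round $s$ with $r < s \le r'$ in which the precommit supermajority "escapes" $\mathrm{chain}(B)$ is where the $f+1$ culprits live, found by the same intersection argument applied to round $s$'s prevote set and the set of honest voters whose round-$s$ estimate was $\ge B$.

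**For the "synchronous procedure" clause**, I would make the above witnesses explicit as a challenge protocol: starting from the two commit messages, a challenger asks the voter who finalised $B'$ in round $r'$ for the round-$r'$ prevotes justifying their precommits, then walks back round by round; at round $s$ each honest voter, when challenged, reveals the votes ($V_{s-1,v}$, $C_{s-1,v}$) that made round $s-1$ completable for them and justified their round-$s$ vote, and since after $\GST$ all these messages propagate within time $T$, the whole walk terminates in $O(r'-r)$ rounds of communication, at the end of which the intersection set $X$ can be computed from the collected votes. I would state this as a lemma giving, for each round in the descent, either a set of $f+1$ equivocators or a certificate that the supermajority still lies on $\mathrm{chain}(B)$, and then assemble it.

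**The main obstacle** I anticipate is getting the bookkeeping of the induction exactly right across the gap between rounds $r$ and $r'$: in particular, handling that $E_{r,v}$ is time-dependent and that a voter may start round $r+1$ before seeing all of $C_r$, so "$E_{r,v} \ge B$" must be argued from completability-as-seen-by-$v$ rather than from the global vote set, and one must be careful that the "possible/impossible supermajority" notions (which were defined to behave well only on \emph{safe} sets) are used correctly when the relevant set might not be safe — which is precisely the case we are in, since we are trying to exhibit $f+1$ equivocators. The resolution is that the impossibility side of the argument (at least $(n+f+1)/2$ voters voting the "wrong" way) is a purely syntactic statement about the multiset of received votes and does not need safety; safety is only what we are disproving. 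Making sure every invocation of Lemma~\ref{lem:impossible} and of the supermajority definitions is of the safety-free syntactic form is the delicate point.
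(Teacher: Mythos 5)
Your proposal matches the paper's proof in all essentials: the same backward challenge procedure from round $r'$ down to round $r+1$ (each challenged voter justifies $E_{r''-1}\not\geq B$ by exhibiting a prevote or precommit set for round $r''-1$ in which a supermajority for $B$ is impossible, which is exactly Lemma~\ref{lem:honest-answer}), terminating in the same counting argument that two sets of size at least $(n+f+1)/2$ among $n$ voters overlap in at least $f+1$ voters who must have equivocated. The one point your ``core combinatorial step'' glosses over is that the round-$r$ impossibility witness may be a set of \emph{prevotes}, which cannot be intersected with the commit message's \emph{precommits} to produce equivocations (precommitting $\geq B$ while prevoting $\not\geq B$ is not Byzantine behaviour); the paper closes this case with one additional query asking the round-$r$ precommitters for $B$ which prevotes they saw, obtaining a prevote set $T$ with a supermajority for $B$ that is then intersected with the prevote witness $S$.
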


The challenge procedure works as follows: If $B$ and $B'$ are committed in the same round, then the union of their precommits must contain at least $f$ equivocations, so we are done.  Otherwise, we may assume by symmetry that $B$ was committed in round $r$ and $B'$ in round $r' > r$.  There are at least $n-f$ voters who precommitted $\geq B'$ or equivocated in round $r$ in their commit messages, so we ask those who precommitted $\geq B'$ why they did so.

Starting with $r''=r' $, we ask queries of the following form: 
\begin{itemize}
\item Why was $E_{r''-1} \not\geq B$  when you prevoted for or precommitted to $B'' \not\geq B$ in round $r'' > r$?
\end{itemize}
\noindent Any honest voter should be able to respond to this, as is shown in Lemma \ref{lem:honest-answer} below. 

The response is of the following form:
\begin{itemize}
\item A either a set $S$ of prevotes for round $r''-1$, or else a set $S$ of precommits for round $r''-1$, in either case such that it is impossible for $S$ to have a supermajority for $B$.
\end{itemize}

Any honest voter should respond.  In particular, if no voter responds, then we consider all  voters how should have responded but didn't as Byzantine and we return this set of voters, along with any equivocators, which will be at least $n-f$ voters total. If any do respond, then if $r'' > r+1$, we can ask the same query for at least $n-f$ voters in round $r''-1$. We note however that if any voters do respond then we will not punish non-responders.

If we ask such queries for a vote in all rounds between $r''=r'$ and $r''=r+1$ and get valid responses, since some voter responds when $r''=r+1$, then we have either a set $S$ of prevotes or precommits in round $r$ that show it is impossible for $S$ to have a supermajority for $B$ in round $r$.

If $S$ is a set of precommits, then if we take the union of $S$ and the set of precommits in the commit message for $B$, then the resulting set of precommits for round $r$ has a supermajority for $B$ and it is impossible for it to have a supermajority for $B$. This is possible if the set is not safe and so there must be at least $f+1$ voters who equivocate an so are Byzantine.

If we get a set $S$ of prevotes for round $r$ that does not have a supermajority for $B$, then we need to ask a query of the form

\begin{itemize}
\item Which prevotes for round $r$ have you seen?
\end{itemize}
\noindent to all the voters of precommit in the commit message for $B$  who voted for blocks $B'' \geq B$. There must be $n-f$ such voters and a valid response to this query is a set $T$ of prevotes for round $r$ with a supermajority for $B''$ and so a supermajority for $B$.

If any give a valid response, by a similar argument to the above, $S \cup T$ will have $f+1$ equivocations.

So we either discover $f+1$ equivocations in a vote or else $n-f > f+1$ voters either equivocate or fail to validly respond like a honest voter could do to a query.

\begin{lemma} \label{lem:honest-answer}
An honest voter can answer the first type of query.
\end{lemma}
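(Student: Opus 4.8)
The plan is to unpack what an honest voter $v$ needs in order to answer the query ``Why was $E_{r''-1,v} \not\geq B$ when you prevoted for or precommitted to $B'' \not\geq B$ in round $r''$?'' The claim amounts to: whenever an honest voter casts a vote in round $r''$ for a block $B''$ not on the chain of $B$, that voter has, at the time of voting, observed a set of round-$(r''-1)$ votes (either prevotes or precommits) for which it is impossible to have a supermajority for $B$. First I would recall from the protocol description that an honest voter only starts round $r'' > 1$ once round $r''-1$ is completable from their point of view, and that their round-$r''$ prevote is for the head of the best chain containing $E_{r''-1,v}$ (or a primary-supplied block $B_0$ with $g(V_{r''-1,v}) \geq B_0 > E_{r''-1,v}$), while their precommit is for $g(V_{r'',v})$ with $g(V_{r'',v}) \geq E_{r''-1,v}$.

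Next I would argue the key implication: if $v$'s round-$r''$ vote is for $B'' \nsim B$ (or more precisely $B'' \not\geq B$), then it must be that $E_{r''-1,v} \not\geq B$ — for the precommit this is immediate since $g(V_{r'',v}) \geq E_{r''-1,v}$ and $g(V_{r'',v}) \not\geq B$ forces $E_{r''-1,v} \not\geq B$ by transitivity on the chain; for the prevote, the vote lies on the best chain containing $E_{r''-1,v}$ (or containing $B_0 > E_{r''-1,v}$ which still lies on $\mathrm{chain}(g(V_{r''-1,v}))$), so again $E_{r''-1,v} \geq B$ would put $B$ on that chain and hence force $B'' \geq B$ or $B'' \sim B$, a contradiction. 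So the hypothesis $E_{r''-1,v} \not\geq B$ of the query is automatically satisfied, and what remains is to produce the witnessing vote set.

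Then I would use the definition of $E_{r-1,v}$ and of completability. Recall $E_{r''-1,v}$ is the last block on $\mathrm{chain}(g(V_{r''-1,v}))$ for which it is possible for $C_{r''-1,v}$ to have a supermajority. Since $v$ saw round $r''-1$ as completable, either $E_{r''-1,v} < g(V_{r''-1,v})$ or it is impossible for $C_{r''-1,v}$ to have a supermajority for any child of $g(V_{r''-1,v})$. I would split into cases. If $g(V_{r''-1,v}) \not\geq B$, then since $g(V_{r''-1,v})$ is the $g$-value of the prevote set, Lemma \ref{lem:impossible}(iii) gives that it is impossible for $V_{r''-1,v}$ to have a supermajority for $B$ (using $B \nsim g(V_{r''-1,v})$, which follows from $g(V_{r''-1,v}) \not\geq B$ together with — one needs to check — $B$ not being an ancestor; but $B$ being a strict ancestor of $g(V_{r''-1,v})$ would give $g(V_{r''-1,v}) \geq B$, so indeed $B \nsim g(V_{r''-1,v})$), so $v$ returns $S = V_{r''-1,v}$. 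Otherwise $g(V_{r''-1,v}) \geq B$, and since $E_{r''-1,v} \not\geq B$ we have $E_{r''-1,v} < B \leq g(V_{r''-1,v})$, so $B$ lies strictly between $E_{r''-1,v}$ and $g(V_{r''-1,v})$ on that chain; by the definition of $E_{r''-1,v}$ as the \emph{last} such block, it is impossible for $C_{r''-1,v}$ to have a supermajority for the child of $E_{r''-1,v}$ on $\mathrm{chain}(g(V_{r''-1,v}))$, and by Lemma \ref{lem:impossible}(i) (monotonicity along the chain, since $B$ is $\geq$ that child) it is impossible for $C_{r''-1,v}$ to have a supermajority for $B$; so $v$ returns $S = C_{r''-1,v}$.

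The main obstacle I anticipate is getting the case analysis around $E_{r-1,v}$ exactly right — in particular handling the primary-block branch of the prevote rule (where $v$ prevotes on the best chain containing the primary's $B_0$ rather than $E_{r''-1,v}$) and making sure that in that branch $E_{r''-1,v} \not\geq B$ still holds and the same witness works, and also reconciling the completability disjunction with the definition of $E_{r''-1,v}$ so that one of the two returned sets genuinely witnesses impossibility. A secondary subtlety is the ``possible/impossible'' bookkeeping for sets that may not be safe, but since the witness sets $V_{r''-1,v}$ and $C_{r''-1,v}$ are exactly what $v$ observed and the query only asks $v$ to exhibit impossibility (a monotone, witnessable property by Lemma \ref{lem:impossible}(ii)), this should go through cleanly once the chain-geometry cases are pinned down.
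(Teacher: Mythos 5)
Your overall structure matches the paper's: first reduce the query to the fact that $E_{r''-1,v} \not\geq B$ held at the time of the vote (which you argue correctly, via $B'' \geq E_{r''-1,v}$ in every branch of the voting rules), then exhibit one of $V_{r''-1,v}$, $C_{r''-1,v}$ as a witness of impossibility, with Lemma \ref{lem:impossible}(ii) carrying the witness forward to the time of the challenge. However, your case analysis in the second step has a genuine gap. You split on $g(V_{r''-1,v}) \geq B$ versus $g(V_{r''-1,v}) \not\geq B$, and in the latter case you claim $B \nsim g(V_{r''-1,v})$ so that Lemma \ref{lem:impossible}(iii) applies to $V_{r''-1,v}$. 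But $g(V_{r''-1,v}) \not\geq B$ only rules out $B$ being an ancestor of (or equal to) $g(V_{r''-1,v})$; it does not rule out $B > g(V_{r''-1,v})$. In that sub-case $B \sim g(V_{r''-1,v})$, Lemma \ref{lem:impossible}(iii) does not apply, and $V_{r''-1,v}$ genuinely need not witness impossibility: $g$ returns the \emph{highest} block with an actual supermajority, and a strict descendant of it can easily still be \emph{possible} in the prevote set (too few votes against it). So the witness $S = V_{r''-1,v}$ you return there can be invalid.

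The correct split, as in the paper, is on whether $B \sim g(V_{r''-1,v})$. When they are on the same chain (covering both $E_{r''-1,v} < B \leq g(V_{r''-1,v})$ and $B > g(V_{r''-1,v})$), the witness must be $C_{r''-1,v}$: the child of $E_{r''-1,v}$ on $\mathrm{chain}(B)$ is impossible for $C_{r''-1,v}$ either by the definition of $E_{r''-1,v}$ as the last possible block on $\mathrm{chain}(g(V_{r''-1,v}))$ (when $E_{r''-1,v} < g(V_{r''-1,v})$) or by the completability of round $r''-1$ (when $E_{r''-1,v} = g(V_{r''-1,v})$, completability gives impossibility for every child of $g(V_{r''-1,v})$), and then Lemma \ref{lem:impossible}(i) propagates this to $B$. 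You state the completability disjunction but never actually invoke it; it is exactly what rescues the sub-case you missed. With the case split repaired along these lines the argument closes and coincides with the paper's proof.
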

We first show that, if a prevote or precommit in round $r$ is cast by an honest voter $v$ for a block $B''$, then at the time of the vote we had $B'' \geq E_{r-1,v}$.
Prevotes should be for the head of a chain containing either $E_{r-1,v}$ or some $B''' > E_{r-1,v}$ by step 2 or 3.  In either case we have $B'' \geq E_{r-1,v}$. Precommits should be for $g(V_{r,v})$ but $v$ waits until $g(V_{r,v}) \geq E_{r-1,v}$, by step 4, before precommitting, so again this holds.
It follows that, if $B'' \not\geq B$, then we had $E_{r-1,v} \not\geq B$.

We next show that if we had $E_{r-1,v} \not\geq B$ at the time of the vote then we can respond to the query validly, by demonstrating the impossibility of a supermajority for $B$. 
If $B$ was not on the same chain with $g(V_{r-1,v})$, then by Lemma \ref{lem:impossible} (iii), it was impossible for $V_{r-1,v}$ to have a supermajority for $B$, as desired. 
If $B$ was on the same chain as $g(V_{r-1,v})$, then it was on the same chain as $E_{r-1,v}$ as well.  In this case, we must have $B > E_{r-1,v}$ since $E_{r-1,v} \not\geq B$.
 However, possibly using that round $r-1$ is completable, it was impossible for $C_{r-1,v}$ to have a supermajority for any child of $E_{r-1,v}$ on the same chain with $g(V_{v,r})$ and in particular for the child of $E_{r-1,v}$ on $\textrm{chain}(B)$.
By Lemma \ref{lem:impossible} (i), this means $C_{r-1,v}$ did not have a supermajority for $B$, again as desired.

Thus we have that, at the time of the vote, for one of $V_{r-1,v}$, $C_{r-1,v}$, it was impossible to have a supermajority for $B$. The current sets $V_{r-1,v}$ and $C_{r-1,v}$ are supersets of those at the time of the vote, and so by Lemma \ref{lem:impossible} (ii), it is still impossible. Thus $v$ can respond validly.

This is enough to show Theorem \ref{thm:accountable}.  Note that if $v$ sees a commit message for a block $B$ in round $r$ and has that $E_{r',v} \not\geq B$, for some completable round $r' \geq r$, then they should also be able to start a challenge procedure that successfully identifies at least $f+1$ Byzantine voters in some round. Thus we have that:

\begin{corollary} \label{cor:overestimate-final}
If there at most $f$ Byzantine voters in any vote, $B$ was finalised in round $r$, and an honest participant $v$ sees that round $r' \geq r$ is completable, then $E_{r',v} \geq B$.
\end{corollary}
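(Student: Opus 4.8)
The plan is to argue by contradiction: suppose $E_{r',v}\not\geq B$, and derive that at least $f+1$ voters equivocate in a single vote (the prevotes or the precommits of some round), contradicting the hypothesis that every vote has at most $f$ Byzantine voters. This is essentially the reasoning behind Theorem~\ref{thm:accountable} and Lemma~\ref{lem:honest-answer}, but packaged as a non-interactive descent, since under the $f$-fault hypothesis every challenge query in the proof of Theorem~\ref{thm:accountable} is guaranteed an honest respondent and so can never ``blame non-responders''. First I would extract a witness at round $r'$: since $v$ sees round $r'$ as completable and $E_{r',v}\not\geq B$, the definition of $E_{r',v}$ together with Lemma~\ref{lem:impossible} gives one of two cases. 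Either $B\nsim g(V_{r',v})$, in which case Lemma~\ref{lem:impossible}(iii) makes a supermajority for $B$ impossible in $V_{r',v}$; or $B$ lies on $\mathrm{chain}(g(V_{r',v}))$ with $B>E_{r',v}$, in which case $E_{r',v}$ has a child on that chain and $B$ is at least that child, for which a supermajority is impossible in $C_{r',v}$ (as in Lemma~\ref{lem:honest-answer}, possibly invoking completability of round $r'$), so by Lemma~\ref{lem:impossible}(i) a supermajority for $B$ is impossible in $C_{r',v}$; the degenerate cases ($g(V_{r',v})=\mathrm{nil}$, or $B$ the genesis block) are trivial. Since $V_{r',v}$ and $C_{r',v}$ are subsets of the full round-$r'$ prevote resp.\ precommit sets, Lemma~\ref{lem:impossible}(ii) upgrades this to: some full round-$r'$ vote (the prevotes or the precommits) makes a supermajority for $B$ impossible.

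Next I would run the descent. The key step is: if for some round $k$ with $r<k\le r'$ the full set of round-$k$ prevotes or precommits makes a supermajority for $B$ impossible, then so does some round-$(k-1)$ vote. Indeed, impossibility means at least $(n+f+1)/2$ voters voted $\not\geq B$ or equivocated in that round-$k$ vote; since at most $f$ are Byzantine and honest voters do not equivocate, at least one honest voter $v'$ cast a prevote or a precommit for a block $\not\geq B$ in round $k$. By the argument of Lemma~\ref{lem:honest-answer}, $v'$ then had $E_{k-1,v'}\not\geq B$ at the time of that vote, and hence one of $V_{k-1,v'}$, $C_{k-1,v'}$ makes a supermajority for $B$ impossible; Lemma~\ref{lem:impossible}(ii) again upgrades this to the full round-$(k-1)$ vote of the appropriate type. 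Iterating from $r'$ down to $r$ produces a set $S$ of round-$r$ prevotes or precommits for which a supermajority for $B$ is impossible.

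Finally I would close at round $r$ using the commit message for $B$, exactly as in the last steps of Theorem~\ref{thm:accountable}. If $S$ is a set of round-$r$ precommits, then the union of $S$ with the precommits in $B$'s commit message is a set of round-$r$ precommits that both has a supermajority for $B$ (from the commit message, by monotonicity) and makes one impossible (from $S$, by Lemma~\ref{lem:impossible}(ii)); since the ``$\geq B$ or equivocate'' set and the ``$\not\geq B$ or equivocate'' set have sizes summing to at least $n+f+1$, lie inside the $n$ voters, and overlap only in equivocators, there must be at least $f+1$ equivocators in that precommit vote. If instead $S$ is a set of round-$r$ prevotes, note that $B$'s commit message contains a precommit $\geq B$ from some honest voter $v'$, who precommitted for $g(V_{r,v'})$; hence $V_{r,v'}$ has a supermajority for $g(V_{r,v'})\geq B$ and therefore for $B$, and unioning $S$ with $V_{r,v'}$ gives a set of round-$r$ prevotes that simultaneously has and forbids a supermajority for $B$, so the same count yields at least $f+1$ equivocators in the round-$r$ prevote. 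Either way, at least $f+1$ Byzantine voters share a single vote, contradicting the hypothesis, so $E_{r',v}\geq B$.

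The main obstacle is keeping the descent airtight: one must check that, under the $f$-fault hypothesis, each level has at least one honest respondent (so the descent neither stalls nor needs to blame non-responders), that the witness type may alternate freely between prevotes and precommits from level to level without violating the hypothesis of Lemma~\ref{lem:honest-answer} (which covers voters that voted $\not\geq B$ by \emph{either} kind of vote), and that the two base-case sub-cases at round $r$ genuinely exhaust the possibilities given that $B$ carries a valid round-$r$ commit message. All of the required combinatorial facts already appear in the proofs of Theorem~\ref{thm:accountable} and Lemma~\ref{lem:honest-answer}, so the work is in assembling them into the descent rather than proving anything new.
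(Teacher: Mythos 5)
Your proof is correct and follows the paper's own route: the paper justifies this corollary precisely by observing that if $E_{r',v}\not\geq B$ for a completable round $r'\geq r$, then the challenge procedure of Theorem~\ref{thm:accountable} (via Lemma~\ref{lem:honest-answer}) would identify $f+1$ Byzantine voters in a single vote, contradicting the hypothesis. Your non-interactive descent is that same argument written out in full, with the correct supporting observation that under the at-most-$f$ assumption every query level is guaranteed an honest respondent, so the procedure never needs to blame non-responders.
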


\subsection{Liveness }

We show the protocol is deadlock free and also that it finalises new blocks quickly in a weakly synchronous model.
For this section, we will assume that there are at most $f < n/3$ Byzantine voters for each vote, and so that the sets of prevotes and precommits for each round are safe.

We define $V_{r,v,t}$ be the set $V_{r,v}$ at time $t$ and similarly for $C_{r,v,t}$ and the block $E_{r,v,t}$ .

We first show that the completability of a round and the estimate for a completable round are monotone in the votes we see, in the latter case monotonically decreasing: 

\begin{lemma} \label{lem:message-monotonicity-completed-estimate}
Let $v,v'$ be (possibly identical) honest participants, $t,t'$ be times, and $r$ be a round.
Then if $V_{r,v,t} \subseteq V_{r,v',t'}$ and $C_{r,v,t} \subseteq C_{r,v',t'}$and $v$ sees that $r$ is completable at time $t$, then $E_{r,v',t'} \leq E_{r,v,t}$ and $v'$ sees that $r$ is completable at time $t'$.
\end{lemma}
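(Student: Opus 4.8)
I would plan to reduce the statement to two monotonicity facts already in hand: GHOST is monotone in the prevote set (Lemma~\ref{lem:ghost-monotonicity}), and being \emph{impossible} to have a supermajority for a fixed block is monotone both in the vote set (Lemma~\ref{lem:impossible}(ii)) and upward along a chain (Lemma~\ref{lem:impossible}(i)). Throughout, all round-$r$ prevote and precommit sets are safe by the standing assumption of this section, and each round has $n=3f+1$ voters, so the supermajority threshold $(n+f+1)/2$ equals $2f+1$. First I would fix notation: $B_1=g(V_{r,v,t})$, $E_1=E_{r,v,t}$, $B_2=g(V_{r,v',t'})$, $E_2=E_{r,v',t'}$. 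Since $v$ sees $r$ completable at time $t$ I may assume $B_1$ is a genuine block (if $g(V_{r,v,t})$ were nil the completability condition could not hold), so $V_{r,v,t}\subseteq V_{r,v',t'}$ together with Lemma~\ref{lem:ghost-monotonicity}(2)--(3) gives $B_1\leq B_2$, and then $B_1$, $B_2$, $E_1$ (which is $\leq B_1$) and $E_2$ (which is $\leq B_2$) all lie on $\mathrm{chain}(B_2)$; in particular $E_1$ and $E_2$ are comparable.

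Next I would prove $E_2\leq E_1$ by contradiction. If this fails then $E_2>E_1$, so there is a child $D$ of $E_1$ on $\mathrm{chain}(B_2)$ with $E_1<D\leq E_2\leq B_2$, and the key claim is that it is impossible for $C_{r,v,t}$ to have a supermajority for $D$. If $E_1<B_1$ this is immediate: $D\leq B_1$, so $D$ is the child of $E_1$ on $\mathrm{chain}(B_1)$, and impossibility follows from the definition of $E_1$ as the last block on that chain with a possible supermajority in $C_{r,v,t}$. If $E_1=B_1$ then the first clause of completability fails for $v$, so the second holds, giving that $C_{r,v,t}$ has precommits from at least $2f+1$ voters and that every child of $B_1$ appearing on the chain of some vote in $C_{r,v,t}$ is impossible; now either $D$ appears on such a chain and we are done, or no precommit in $C_{r,v,t}$ is $\geq D$, in which case each of those $\geq 2f+1=(n+f+1)/2$ voters either precommits for a block $\not\geq D$ or equivocates in $C_{r,v,t}$, which witnesses impossibility. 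Given the claim, Lemma~\ref{lem:impossible}(ii) upgrades it to impossibility for $C_{r,v',t'}$, and then Lemma~\ref{lem:impossible}(i) with $E_2\geq D$ makes a supermajority for $E_2$ impossible in $C_{r,v',t'}$, contradicting that $E_2$ (by its definition) has a possible supermajority there.

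For completability of $v'$ at time $t'$ I would argue from $E_2\leq E_1\leq B_1\leq B_2$. If $E_2<B_2$, the first clause holds for $v'$ and we are done. Otherwise $E_2=E_1=B_1=B_2$, so the first clause had failed for $v$ and hence the second held; thus $C_{r,v,t}$ has at least $2f+1$ precommitters and every child of $B_1$ appearing on a chain of a vote in $C_{r,v,t}$ is impossible for $C_{r,v,t}$. Since $C_{r,v',t'}\supseteq C_{r,v,t}$ also has at least $2f+1$ precommitters, it remains to check each child $D$ of $B_2$ appearing on a chain of a vote in $C_{r,v',t'}$: if $D$ already appeared in $C_{r,v,t}$ this follows from $v$'s condition and Lemma~\ref{lem:impossible}(ii); and if not, then exactly as above all $\geq 2f+1$ of $v$'s precommitters are $\not\geq D$ or equivocate, so $D$ is impossible for $C_{r,v,t}$ and hence, by Lemma~\ref{lem:impossible}(ii), for $C_{r,v',t'}$. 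So the second clause of completability holds for $v'$.

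I expect the main obstacle to be the bookkeeping around children of $g(V_{r,v,t})$ that $v'$ can ``see'' through precommits $v$ has not yet received: one has to notice that the $\geq 2f+1$ precommitters $v$ already observed either already witness such a child (so monotonicity of impossibility in the vote set applies directly) or, none of them being above that child, already meet the threshold against it — and this is the one place where $n=3f+1$ is used. Everything else is a routine application of Lemmas~\ref{lem:ghost-monotonicity} and~\ref{lem:impossible}, together with disposing of the degenerate case where $g(V_{r,v,t})$ is nil.
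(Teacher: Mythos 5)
Your proof is correct and follows essentially the same route as the paper's: GHOST monotonicity gives $g(V_{r,v,t})\leq g(V_{r,v',t'})$, monotonicity of impossibility (Lemma~\ref{lem:impossible}(i) and (ii)) transfers the completability witness from $C_{r,v,t}$ to $C_{r,v',t'}$, and the "last possible block on the chain" characterisation of the estimate yields $E_{r,v',t'}\leq E_{r,v,t}$. Your version is somewhat more careful than the paper's (which pulls possibility of $E_{r,v',t'}$ back to $C_{r,v,t}$ rather than pushing impossibility of a child forward), in particular in handling children of $g(V_{r,v,t})$ that appear only on chains of votes $v$ has not yet seen, where the $n=3f+1$ count is genuinely needed.
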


\begin{proof} 
Since $v$ sees that $r$ is completable at time $t$, 
either $E_{r,v} < g(V_{r,v})$ requiring $(n+f+1)/2 > 2f + 1$ votes, or else it is impossible for $C_{r,v}$ to have a supermajority for any children of $g(V_{r,v})$, requiring $2f + 1$ votes.
In either case, both $V_{r,v,t}$ and $C_{r,v,t}$ contain votes from $2f + 1$ voters and so the same holds for $V_{r,v',t'}$ and $C_{r,v',t'}$. 
By Lemma \ref{lem:ghost-monotonicity} (ii), $g(V_{r,v',t'}) \geq g(V_{r,v,t})$.
As it is impossible for $C_{r,v,t}$ to have a supermajority for any children of $g(V_{r,v,t})$, it follows from Lemma \ref{lem:impossible} (i \& ii) that it is impossible for $C_{r,v',t'}$ as well, and so both $E_{r,v',t'} \leq g(V_{r,v,t})$ and $v'$ sees $r$ is completable at time $t'$.
But now $E_{r,v,t}$ and $E_{r,v',t'}$ are the last blocks on $\textrm{chain}(g(V_{r,v,t}))$ for which it is possible for $C_{r,v,t}$ and $C_{r,v',t'}$ respectively to have a supermajority, 
As it is possible for $C_{r,v',t'}$ to have a supermajority for $E_{r,v',t'}$, then it is possible for $C_{r,v,t}$ to have a supermajority for $E_{r,v',t'}$ as well, by Lemma \ref{lem:impossible} (ii) and tolerance assumptions, so $E_{r,v',t'} \leq E_{r,v,t}$.
\end{proof}
 
\subsubsection{Deadlock Freeness}

Now we can show deadlock freeness for the asynchronous gossip network model, when a message that is sent or received by any honest participant is eventually received by all honest participants.

\begin{proposition} Suppose that we are in the asynchronous gossip network model and that at most $f$ voters for any vote are Byzantine. Then the protocol is deadlock free.\end{proposition}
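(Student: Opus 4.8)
The plan is to show that every honest voter can always eventually take the next step of the protocol, so no honest voter gets stuck forever at any of the four numbered steps in a given round. I would prove this by induction on the round number $r$: assuming that every honest voter eventually starts round $r$ and casts both its prevote and precommit in round $r$ (for those rounds in which it is a voter), I show the same for round $r+1$. The base case $r=1$ is immediate since $E_{0,v}$ is the genesis block and the starting condition is vacuous. For the inductive step, the key observation is that the only blocking conditions in steps 1, 3, and 4 are (a) ``round $r$ is completable'', (b) the timeouts $t_{r,v}+2T$, $t_{r,v}+4T$, and (c) $g(V_{r,v}) \geq E_{r-1,v}$; the timeouts elapse on their own, so the real work is to show that the completability condition and the condition $g(V_{r,v}) \geq E_{r-1,v}$ are eventually satisfied, and that step 1's precondition (round $r-1$ completable) is eventually met.

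First I would argue that if any honest voter ever sees round $r$ as completable, then by Lemma~\ref{lem:message-monotonicity-completed-estimate} and the gossip assumption (all honest messages eventually reach all honest participants), every honest voter eventually sees round $r$ as completable as well; so it suffices to show \emph{some} honest voter eventually sees it. By the inductive hypothesis, all $\geq n-f \geq 2f+1$ honest voters eventually precommit in round $r$, and these precommits are eventually seen by every honest voter, so eventually $C_{r,v}$ contains votes from $2f+1$ voters; similarly $V_{r,v}$ eventually contains all $\geq 2f+1$ honest prevotes. Once $C_{r,v}$ has $2f+1$ voters, since the set of all votes is safe, either $E_{r,v} < g(V_{r,v})$ or it is impossible for $C_{r,v}$ to have a supermajority for any child of $g(V_{r,v})$ — because with $2f+1$ precommits present, at most $f$ of which equivocate, a supermajority of $(n+f+1)/2$ for two distinct children of $g(V_{r,v})$ is impossible, so at most one child can still be possible, and if none is then the second disjunct holds while if exactly one child $C$ is still possible then $E_{r,v} \geq C > g(V_{r,v})$... here I need to be slightly careful and instead argue directly from the definition that completability must hold once enough honest votes are in. I also need that eventually $g(V_{r,v}) \geq E_{r-1,v}$: every honest prevote in round $r$ is for a descendant of $E_{r-1,v}$ (shown in the proof of Lemma~\ref{lem:honest-answer}, using the inductive hypothesis that $E_{r-1,v}$ is well-defined), so once all honest prevotes are seen, $V_{r,v}$ has a supermajority for $E_{r-1,v}$, hence $g(V_{r,v}) \geq E_{r-1,v}$.

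Finally, to close the induction I would note that once every honest voter has cast its prevote and precommit in round $r$ and round $r$ is seen as completable by all of them, each honest voter satisfies the preconditions of step 1 for round $r+1$ (round $r$ completable, and all earlier votes cast), so it starts round $r+1$; the remaining steps for round $r+1$ then go through by the same reasoning. The main obstacle I anticipate is the completability argument: I must check carefully that the presence of $2f+1$ honest precommits for the chain with head $g(V_{r,v})$ — possibly at various block numbers along that chain — really does force one of the two disjuncts in the definition of completable, using the safety of the vote set and the counting inequalities $(n+f+1)/2 \leq n-f$ when $f < n/3$, rather than just $2f+1$ raw votes. Everything else is bookkeeping over the gossip model plus the monotonicity lemmas already established.
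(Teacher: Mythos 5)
Your overall architecture matches the paper's: induct over votes, observe that the only real blocking conditions are ``round $r-1$ is completable'' (for the prevote) and $g(V_{r,v}) \geq E_{r-1,v}$ (for the precommit), and use the gossip assumption plus the monotonicity lemmas to show each is eventually satisfied. But the crux --- showing that some honest voter eventually sees round $r-1$ as completable --- is exactly where your argument does not close, and you say so yourself. Merely having $2f+1$ precommits in $C_{r-1,v}$ does not force either disjunct of completability: your counting shows at most one child of $g(V_{r-1,v})$ can remain possible, but in the case where exactly one child $C$ remains possible, the second disjunct fails and the first disjunct ($E_{r-1,v} < g(V_{r-1,v})$, i.e.\ impossibility for $g(V_{r-1,v})$ itself) need not hold either; your remark that ``$E_{r,v} \geq C > g(V_{r,v})$'' is not even well-formed, since $E_{r,v}$ is by definition on $\mathrm{chain}(g(V_{r,v}))$ and hence $\leq g(V_{r,v})$. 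What is missing is a reason why \emph{no} child can remain possible once enough information has propagated.

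The paper supplies that reason with a specific construction you would need to reproduce: let $S$ be the union of all prevotes of round $r-1$ seen by honest voters before they precommitted. By Lemma~\ref{lem:ghost-monotonicity}, every honest precommit is for a block $\leq g(S)$, so the set $T$ of honest precommits contains $n-f \geq (n+f+1)/2$ votes that are $\not\geq B$ for every $B \not\leq g(S)$ --- in particular for every child of $g(S)$ --- making a supermajority for any such child impossible in any superset of $T$. Once a voter $v$ has received both $S$ and $T$, we get $g(V_{r-1,v}) \geq g(S)$ and $E_{r-1,v} \leq g(S)$, and a case split on whether $E_{r-1,v} < g(V_{r-1,v})$ or $E_{r-1,v} = g(V_{r-1,v}) = g(S)$ yields one of the two disjuncts. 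Note the essential point: the impossibility is anchored at children of $g(S)$, not at children of whatever $g(V_{r-1,v})$ happens to be at an arbitrary time, and you must wait until $v$ has seen the prevotes that \emph{justified} the honest precommits so that $g(V_{r-1,v})$ catches up to $g(S)$. A similar cross-voter subtlety affects your second condition: an honest prevote $B_v$ satisfies $B_v \geq E_{r-1,v,t_v}$ for the \emph{prevoter's own} estimate, and you need Lemma~\ref{lem:message-monotonicity-completed-estimate} (or Corollary~\ref{cor:overestimate-final}) to transfer this to $B_v \geq E_{r-1,v',t'}$ for the waiting precommitter $v'$ once $v'$ has seen everything $v$ saw; your appeal to Lemma~\ref{lem:honest-answer} alone does not bridge the two voters' estimates.
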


\begin{proof} We need to show that if all honest participants reach some vote, then all of them eventually reach the next.

If all honest voters reach a vote, then they will vote and all honest participants see their votes. We need to deal with the two conditions that might block the algorithm even then.
To reach the prevote of round $r$, a participant may be held up at the condition that round $r-1$ must be completable. To reach the precommit, a voter may be held up by the condition that $g(V_{r,v}) \geq E_{r-1,v}$.

For the first case, the prevote, let $S$ be the set of all prevotes from round $r-1$ that any honest voter saw before they precommitted in round $r-1$.
By Lemma \ref{lem:ghost-monotonicity}, when voter $v'$ precommitted, they do it for block $g(V_{r-1,v'}) \leq g(S)$.
Let $T$ be the set of precommits in round $r$ cast by honest voters.
Then for any block $B \not\leq g(S)$, $T$ does not contain any votes that are $\geq B$ and so it is impossible for $T$ to have a supermajority for $B$.
In particular, it is impossible for $T$ to have a supermajority for any child of $g(S)$. 

Now consider a voter $v$. By our network assumption, there is a time $t$ by which they have seen the votes in $S$ and $T$. Consider any $t' \geq t$.
At this point we have $g(V_{r,v,t;}) \geq g(S)$. It is impossible for $C_{r,v,t'}$ to have a supermajority for any child of $g(S)$ and so $E_{r-1,v,t'} \leq g(S)$, whether or not this inequality is strict, we satisfy one of the two conditions for $v$ to see that round $r-1$ is completable at time $t'$.
Thus if all honest voters reach the precommit vote of round $r-1$, all honest voters reach the prevote of round $r$.

Now we consider the second case, reaching the precommit. 
Note that any honest prevoter in round $r$ votes for a block $B_v \geq E_{r-1,v,t_v}$ where $t_v$ is the time they vote. Now consider any honest voter for the precommit $v'$. By some time $t'$, they have received all the messages received by each honest voter $v$ at time $t_v$ and $v'$'s prevote. 
Then by Corollary \ref{cor:overestimate-final}, $B_v \geq E_{r-1,v,t_v} \geq E_{r-1,v',t'}$. Since $V_{r,v',t'}$ contains these $B_v$, $g(V_{r,v',t'}) \geq  E_{r-1,v',t'}$. Thus if all honest voters prevote in round $r$, eventually all honest voters precommit in round $r$.

An easy induction completes the proof of the proposition.
\end{proof}

\subsubsection{Weakly synchronous liveness}

Now we consider the weakly synchronous gossip network model. The idea that there is some global stabilisation time($\GST$) such that any message received or sent by an honest participant at time $t$ is received by all honest participants at time $\max\{t,\GST\}+T$.

Let $t_r$ be the first time any honest participant enters round $r$ i.e. the minimum over honest participants $v$ of $t_{r,v}$.

\begin{lemma} \label{lem:timings}
Assume the weakly synchronous gossip network model and that each vote has at most $f$ Byzantine voters. Then if $t_r \geq \GST$, we have that
\begin{itemize}
\item[(i)] $t_r \leq t_{r,v} \leq t_r+T$ for any honest participant $v$,
\item[(ii)] no honest voter prevotes before time $t_r+2T$,
\item[(iii)] any honest voter $v$ precommits at the latest at time $t_{r,v}+4T$,
\item[(iv)] for any honest $v$, $t_{r+1,v} \leq t_r + 6T$.
\end{itemize}
\end{lemma}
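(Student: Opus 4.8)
The plan is to prove the four timing bounds essentially in the order stated, since each depends on the previous ones, using the protocol's step-by-step structure together with the $\GST$ network guarantee (a message sent or received by an honest participant at time $t$ is received by all honest participants by $\max\{t,\GST\}+T$) and Lemma~\ref{lem:message-monotonicity-completed-estimate} (monotonicity of completability). Throughout I would keep in mind that all four claims are conditioned on $t_r \geq \GST$, so every message relevant to round $r$ is sent at or after $\GST$ and therefore propagates within time $T$.

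For (i): by definition $t_r = \min_v t_{r,v}$, so $t_r \le t_{r,v}$ is immediate. For the upper bound, let $v^\star$ be an honest participant achieving $t_{r,v^\star}=t_r$. To start round $r$, $v^\star$ saw round $r-1$ as completable and had cast all its votes in earlier rounds. The votes witnessing completability of round $r-1$ (prevotes and precommits of round $r-1$) were received or sent by $v^\star$ by time $t_r$, hence received by every honest participant by time $t_r + T$ (using $t_r \ge \GST$); by Lemma~\ref{lem:message-monotonicity-completed-estimate} every honest participant then also sees round $r-1$ as completable by $t_r+T$. One must also check the auxiliary condition that an honest voter has cast its own votes in all previous rounds — but an honest voter that has not yet prevoted/precommitted in round $r-1$ will do so once it enters round $r-1$ and waits out its timers, and a short argument (or an induction folded into this lemma) shows this does not push $t_{r,v}$ past $t_r+T$; this bookkeeping about the "cast all previous votes" clause is the one slightly fiddly point. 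Given that, $t_{r,v}\le t_r+T$.

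For (ii): an honest prevoter $v$ in round $r$ waits until time $t_{r,v}+2T$ or until it sees round $r$ completable. Since $t_{r,v}\ge t_r$, the timer alone gives no prevote before $t_r+2T$; and round $r$ cannot be seen completable before any honest prevotes of round $r$ exist, so the completability trigger also cannot fire before honest prevotes appear — one argues that completability of round $r$ requires prevotes from $2f+1$ voters, at least $f+1$ of them honest, so some honest prevote must precede it, giving no honest prevote strictly before $t_r+2T$. For (iii): by (i) and (ii), by time $t_{r,v}+4T \ge t_r+2T+T$ plus a further $T$, voter $v$ has received all honest prevotes of round $r$ (they were cast by $t_r+2T+T$ at the latest — wait, more carefully: honest prevotes are cast in $[t_r+2T,\; t_{r}+T+2T]$ by (i)+(ii)+step 3's timer, hence received by all by $t_r+4T$), so $g(V_{r,v})\ge E_{r-1,v}$ holds (this uses the deadlock-freeness argument's reasoning: honest prevotes are for chains containing $E_{r-1,\cdot}$, which by Corollary~\ref{cor:overestimate-final} dominate $E_{r-1,v}$) and condition (i) of step 4 is met, so $v$ precommits by $t_{r,v}+4T$. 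For (iv): by (iii) every honest voter precommits by $t_{r,v}+4T \le t_r+T+4T = t_r+5T$ (using (i)), so all honest precommits of round $r$ are received by every honest participant by $t_r+6T$; the set of all honest precommits of round $r$ then witnesses that round $r$ is completable (as in the deadlock-freeness proof: for $B\not\le g(S)$ where $S$ is the set of honest prevotes seen at precommit time, it is impossible for this precommit set to have a supermajority for any child of $g(S)$, forcing $E_{r,v}\le g(S)=g(V_{r,v})$ or the impossibility condition), so every honest $v$ can enter round $r+1$ by $t_r+6T$, i.e. $t_{r+1,v}\le t_r+6T$.

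The main obstacle I expect is not any single deep step but getting the constants to line up cleanly across (i)–(iv) while honestly discharging the "has cast votes in all previous rounds" precondition in step 1 — that clause couples round $r$ to the entire history, so the clean statement $t_{r,v}\le t_r+T$ really needs an inductive hypothesis (built into this lemma or a preceding one) ensuring honest voters never lag more than $T$ behind in entering rounds once we are past $\GST$, and one must verify the $+6T$ per-round increment is consistent with that hypothesis — which it is, since $t_{r+1}\le \min_v t_{r+1,v}$ and (iv) controls the max. I would state (i)–(iv) as a simultaneous induction on $r$, proving (i) using the round-$(r-1)$ instance of (iv) for the timing of completability messages, then (ii), (iii), (iv) for round $r$ in sequence using the network bound and Lemma~\ref{lem:message-monotonicity-completed-estimate}.
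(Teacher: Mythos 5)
Your treatment of (ii)--(iv) is essentially the paper's: the first honest prevote of round $r$ cannot be triggered by completability (completability needs at least $2f+1$ prevotes, hence a prior honest prevote), so it waits the full $2T$ and lands no earlier than $t_r+2T$; honest prevotes are then all cast by $t_r+3T$ and received by $t_r+4T$, which yields $g(V_{r,v})\geq E_{r-1,v}$ via Corollary~\ref{cor:overestimate-final} and lets condition (i) of the precommit step fire by $t_{r,v}+4T$; and the honest precommits, all received by $t_r+6T$, make it impossible for $C_{r,v}$ to have a supermajority for any child of $g(V_{r,v})$, so everyone sees round $r$ as completable and enters round $r+1$ by $t_r+6T$.

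The gap is in (i), precisely at the point you flag as fiddly. Your proposed mechanism for discharging the ``has cast votes in all previous rounds'' clause --- the lagging voter ``waits out its timers'', controlled by a simultaneous induction invoking the round-$(r-1)$ instance of (iv) --- does not deliver the $+T$ bound. First, (iv) for round $r-1$ requires $t_{r-1}\geq\GST$, which is not implied by $t_r\geq\GST$, so the induction has no valid base case at the first post-$\GST$ round. Second, even where it applies, (iv) bounds $t_{r,v}$ by $t_{r-1}+6T$, not by $t_r+T$: if some voter entered round $r$ early because completability fired before its timers expired, the gap between $t_{r-1}+6T$ and $t_r+T$ can be several multiples of $T$. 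The paper's argument needs no induction and no timers: let $v'$ be the first honest participant to enter round $r$. Everything $v'$ had seen by time $t_r$ reaches every honest participant by $t_r+T$, so by Lemma~\ref{lem:message-monotonicity-completed-estimate} every honest participant sees every round $r'<r$ as completable and has $g(V_{r',v})\geq E_{r'-1,v}$ by then. These are exactly the non-timer alternatives in steps 3 and 4 of the protocol, so a lagging voter does not wait at all --- it casts all its outstanding prevotes and precommits for rounds $<r$ immediately and enters round $r$ by $t_r+T$. That observation, rather than a per-round induction on (iv), is what makes $t_{r,v}\leq t_r+T$ go through.
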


\begin{proof} Let $v'$ be one of the first honest participants to enter round $r$ i.e. with $t_{r,v'}=t_r$. 
By our network assumption, all messages received by $v'$ before they ended are received by all honest participants before time $t_r+T$.
In particular at time $t_r$, $v'$ sees that all previous rounds are completable and so by Corollary \ref{cor:overestimate-final}, so does every other honest participant by time $t_r+T$.
Also since for $r' < r$, at some time $s_{r'} \leq t_r$ $g(V_{r',v',s_r'}) \geq E_{r',v',s_r'}$, again by Lemma 4, for all honest $v$, $g(V_{r',v,t_r+T}) \geq E_{r',v,t_r+T}$. Looking at the conditions for voting, this means that any honest voter does not need to wait before voting in any round $r' \leq r$. 
Thus they cast any remaining votes and enter round $r$ by time $t_r + T$. This shows (i).

For (ii), note that the only reason why an honest voter would not wait until time $t_{r,v}+2T \geq t_r+ 2T$ is when $n-f$ voters have already prevoted. But since some of those $n-f$ votes are honest, this is impossible before $t_r+2T$

Now an honest voter $v''$ prevotes at time $t_{r,v''}+2T \leq t_r +3T$ and by our network assumptions all honest participants receive this vote by time $t_r+4T$. An honest voter for the precommit $v$ has also received all messages that $v''$ received before they prevoted by then.
Thus the block they prevoted has $B_{v''} \geq E_{r-1,v''} \geq E_{r-1,v,t_r+4T}$, since this holds for every honest voter $v''$, $g(V_{r,v,t_r+4T}) \geq E_{r-1,v,t_r+4T}$. Thus they will precommit by time $t_{r,v}+4T$ which shows (iii).

By the network assumption an honest voter $v'$'s precommit will be received by all honest participants $v$ by time $t_{r,v'}+ 5T \leq t_r+6T$.
Since $v$ will also have received all prevotes $v$ say when they precommitted by this time, their vote $B_{v'}$ will have $B_{v'}=g(V_{r,v'}) \leq g(V_{r,v,t_r+6T})$.
Thus $C_{r, v, t_r+6T}$ contains precommits from $n-f$ voters $v'$ with $B_{v'} \leq g(V_{r,v,t_r+6T})$ and thus it is impossible for $C_{r,v,t_r+6T}$ to have a supermajority for any children of $g(V_{r,v, t_r+6T})$.
Thus $v$ sees that round $r$ is completable at time $t_r+6T$. Since they have already prevoted and precommitted if they were a voter, they will move to round $r+1$ by at latest $t_t+6T$. This is (iv).
\end{proof}

\begin{lemma} \label{lem:honest-prevote-timings}
Suppose $t_r \geq \GST$ and very vote has at most $f$ Byzantine voters. Let $H_r$ be the set of prevotes ever cast by honest voters in round $r$. Then
\begin{itemize}
\item[(a)] any honest voter precommits to a block $\geq  g(H_r)$,

\item[(b)] every honest participant finalises $g(H_r)$ by time $t_r+6T$.
\end{itemize}
\end{lemma}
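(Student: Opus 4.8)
The plan is to prove the two parts of Lemma~\ref{lem:honest-prevote-timings} in sequence, with part~(a) doing the conceptual work and part~(b) following by combining~(a) with the timing bounds of Lemma~\ref{lem:timings} and the monotonicity results already established.

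For part~(a), the key point is to understand what each honest voter prevotes for. By the protocol, an honest prevoter $v$ in round $r$ votes for the head of the best chain containing either $E_{r-1,v}$, or a block $B$ received from the primary with $g(V_{r-1,v}) \geq B > E_{r-1,v}$; in either case the prevote is for a block $\geq E_{r-1,v}$, and moreover (this is the crucial observation) the prevote is for a block on $\mathrm{chain}(g(V_{r-1,v}))$ or a descendant thereof — actually we need that all honest prevotes lie on a common chain. Here I would invoke Corollary~\ref{cor:overestimate-final}: since $t_r \geq \GST$, all honest voters agree that round $r-1$ is completable and, by Lemma~\ref{lem:message-monotonicity-completed-estimate}, their estimates $E_{r-1,v}$ are all $\leq E_{r-1,v'}$ for a fixed early voter, hence all lie on a single chain, and by completability this chain contains every block that could have been finalised in round $r-1$. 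So every honest prevote is for a block that is a descendant of (or equal to) every honest voter's $E_{r-1,\cdot}$. Now $H_r$ consists of $\geq n-f = 2f+1$ honest prevotes, all on chains sharing this common prefix; by Lemma~\ref{lem:ghost-monotonicity}(iii) applied to the singleton sets $\{$honest prevote of $v\}\subseteq H_r$, the block $g(H_r)$ is the head of the common chain of all these prevotes, so each honest prevote is for a block $\geq g(H_r)$. Finally, any honest precommitter $v$ precommits for $g(V_{r,v})$ once $V_{r,v} \supseteq H_r'$ for the set $H_r'$ of honest prevotes it has seen; since $H_r'$ has $2f+1$ honest votes all $\geq g(H_r)$ — wait, this needs care: $v$ may not have seen all of $H_r$. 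The fix is that $g$ is monotone (Lemma~\ref{lem:ghost-monotonicity}(ii)) and the honest prevotes $v$ has seen already form a supermajority for $g$ of their common head, which is $\geq g(H_r)$ once all are in; more precisely, by the time $v$ precommits it has (by Lemma~\ref{lem:timings}(ii)--(iii), since no honest voter prevotes before $t_r+2T$ and $v$ waits to $t_{r,v}+4T \geq t_r + 4T \geq (t_r+2T)+2T$) received all honest prevotes, so $H_r \subseteq V_{r,v}$ at precommit time, giving $g(V_{r,v}) \geq g(H_r)$. That establishes~(a).

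For part~(b), I would argue that by time $t_r + 6T$ every honest participant has seen precommits for blocks $\geq g(H_r)$ from all $n-f$ honest voters (the honest precommits are cast by time $t_{r,v}+4T \leq t_r + 5T$ by Lemma~\ref{lem:timings}(iii) combined with~(i), and delivered by $t_r+6T$), and has also seen all of $H_r$. Since the honest precommits form $2f+1$ votes all $\geq g(H_r)$, the set $C_{r,v}$ has a supermajority for $g(H_r)$, so $g(C_{r,v}) \geq g(H_r)$. Simultaneously $V_{r,v}$ has a supermajority for $g(H_r)$ (as $H_r \subseteq V_{r,v}$ and all honest prevotes are $\geq g(H_r)$), so the finalisation condition is met and the participant finalises $g(C_{r,v}) \geq g(H_r)$, hence finalises $g(H_r)$ and all its ancestors. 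One has to check $g(H_r)$ is later than the last finalised block, but this follows because all honest prevotes are $\geq E_{r-1,v}$, and $E_{r-1,v}$ is an overestimate of any block finalised through round $r-1$ by Corollary~\ref{cor:overestimate-final}, together with the fact that $g(H_r)$ strictly extends $E_{r-1}$ whenever there is consensus — or, if $g(H_r)$ equals the last finalised block, there is simply nothing new to finalise and the statement is vacuous for that block; the safe reading is that $g(H_r) \geq$ last finalised block always, which suffices.

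The main obstacle I anticipate is part~(a), specifically the bookkeeping needed to show that \emph{all} honest prevotes of round $r$ lie on a single chain with head $\geq g(H_r)$, and that each honest precommitter has actually received enough of $H_r$ (indeed all of it, using the $4T$-versus-$2T$ timing gap) by the time it precommits so that its GHOST value dominates $g(H_r)$. This requires carefully chaining Corollary~\ref{cor:overestimate-final} (to align the $E_{r-1,v}$ across honest voters), Lemma~\ref{lem:ghost-monotonicity}(ii) and~(iii) (to pin down $g(H_r)$ as the common head and to get monotonicity of $g$), and Lemma~\ref{lem:timings}(i)--(iii) (to guarantee the message-delivery timing). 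Everything after that — part~(b) — is a relatively mechanical application of the finalisation rule plus one more round of the delivery-timing bound.
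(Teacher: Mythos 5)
Your part~(b) matches the paper's, but your part~(a) has a genuine gap. You argue that every honest precommitter has received \emph{all} of $H_r$ before precommitting, on the grounds that ``$v$ waits to $t_{r,v}+4T$.'' But Lemma~\ref{lem:timings}(iii) only says an honest voter precommits \emph{at the latest} by $t_{r,v}+4T$; the precommit step explicitly lets $v$ precommit as soon as $g(V_{r,v}) \geq E_{r-1,v}$ and \emph{either} condition (ii) (round $r$ is completable) \emph{or} condition (iii) (it is impossible for $V_{r,v}$ to have a supermajority for any child of $g(V_{r,v})$) holds. Condition (iii), for instance, needs only $2f+1$ prevotes in $V_{r,v}$, of which $f$ may be Byzantine, so an honest voter can precommit having seen only $f+1$ of the $2f+1$ honest prevotes. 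At that moment $H_r \not\subseteq V_{r,v}$, and your appeal to $g(V_{r,v}) \geq g(H_r)$ via monotonicity of $g$ collapses. This early-exit case is exactly the substantive content of the paper's proof of~(a): it splits into cases according to which of (i)--(iii) triggered the precommit. Case (i) is your argument. Case (iii) is handled by noting that all honest prevotes are $\geq g(H_r)$, so it remains \emph{possible} for $V_{r,v}$ to have a supermajority for $g(H_r)$; hence by Lemma~\ref{lem:impossible} one cannot have $g(V_{r,v}) < g(H_r)$, and by Lemma~\ref{lem:ghost-monotonicity} the two blocks are on one chain, forcing $g(V_{r,v}) \geq g(H_r)$. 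Case (ii) needs an additional induction on the order in which honest voters precommit (assuming no earlier honest precommit was $\not\geq g(H_r)$, it is still possible for $C_{r,v}$ to have a supermajority for $g(H_r)$, and the same impossibility argument applies). Your proof would need all of this added to be correct.

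A secondary, smaller issue: your invocation of Lemma~\ref{lem:ghost-monotonicity}(iii) on singleton sets of prevotes does nothing, since $g$ of a singleton is nil (no supermajority). The correct route to ``every honest prevote is $\geq g(H_r)$'' is the counting observation that with $n=3f+1$ the supermajority threshold $(n+f+1)/2 = 2f+1$ equals the number of honest voters, so a supermajority for $g(H_r)$ inside the equivocation-free set $H_r$ forces every honest prevote to be for a block $\geq g(H_r)$.
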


\begin{proof} For (a), we separate into cases based on which of the conditions (i)-(iii) that we wait for to precommit hold.

For (i), all honest voters prevote in round $r$ by time $t_r+3T$. So any honest voter $v$ who precommits at or after time $t_{r,v}+4T \geq t_r+4T$ has received all votes in $H_r$ and by Lemma \ref{lem:ghost-monotonicity}, precommits to a block $\geq g(H_r)$.

For (ii), we argue that no honest voter commits a block $\not\geq g(H_r)$ first. The result will then follow by an easy induction once the other cases are dealt with. Suppose that no honest voter has precommitted a block $\not \geq g(H_r)$ so far and that a voter $v$ votes early because of (ii).

Note that, since we assume that all precommits by honest voters so far were $\geq g(H_r)$, it is possible for $C_{r,v}$ to have a supermajority for $g(H_r)$.
For (ii) to hold for a voter $v$ i.e for round $r$ to be completable, it must be the case that either it is impossible for $C_{r,v}$ to have a supermajority for $g(V_{r,v})$ or else be impossible for $C_{r,v}$ to have a supermajority for any children of $g(V_{r,v})$. By Lemma \ref{lem:impossible} cannot have $g(V_{r,v}) < g(H_r)$.
But by Lemma \ref{lem:ghost-monotonicity}, these are on the same chain and so $g(V_{r,v}) \geq g(H_r)$. Since this is the block $v$ precommits to, we are done in case (ii)

For (iii), let $v$ be the voter in question. Note that since $n-f$ honest voters prevoted $\geq g(H_r)$, it is possible for $V_{r,v}$ to have a supermajority for $g(H_r)$. By Lemma \ref{lem:ghost-monotonicity}, $g(V_{r,v})$ is on the same chain as $g(H_r)$.
For (iii), it is impossible for $V_{r,v}$ to have a supermajority for any children of $g(V_{r,v})$. If we had $g(V_{r,v}) < g(H_r)$, by Lemma \ref{lem:impossible}, this would mean that it would be impossible for $V_{r,v}$ to have a supermajority for $g(H_r)$ as well. So it must be that $g(V_{r,v} )\geq g(H_r)$ as required.

For (b), combining (a) and Lemma \ref{lem:timings} (iii), we have that any honest voter $v$ precommits $\geq g(H_r)$ by time $t_{r,v}+4T$. By our network assumption, all honest participants receive these precommits by time $t_r+6T$ and so finalise $g(H_r)$ if they have not done so already.
\end{proof}

\begin{lemma} \label{lem:primary-finalises}
 Suppose that $t_r \geq \GST$, the primary $v$ of round $r$ is honest and no vote has more than $f$ Byzantine voters. Let $B=E_{r-1,v,t_{v,r}}$ be the block $v$ broadcasts if it is not final. Then every honest prevoter prevotes for the best chain including $B$ and all honest voter finalise $B$ by time $t_r+6T$.
 \end{lemma}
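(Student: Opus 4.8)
The plan is to use the timing bounds of Lemma~\ref{lem:timings} to guarantee that every honest prevoter has received the primary's broadcast of $B$ together with all of the primary's round-$(r-1)$ votes before it prevotes, then to invoke the estimate-monotonicity of Lemma~\ref{lem:message-monotonicity-completed-estimate} and the GHOST-monotonicity of Lemma~\ref{lem:ghost-monotonicity} to force each such prevoter onto the best chain containing $B$, and finally to read off finalisation of $B$ from Lemma~\ref{lem:honest-prevote-timings}. We may assume $r\ge 2$ (for $r=1$, $B$ is the genesis block and both claims are trivial), and we are in the case where $B=E_{r-1,v,t_{r,v}}$ is not final, so by step~2 the primary $v$ broadcasts $B$ at time $t_{r,v}$. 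By Lemma~\ref{lem:timings}(i), $t_{r,v}\le t_r+T$, and since $t_{r,v}\ge t_r\ge\GST$ the gossip assumption delivers $B$---as well as every round-$(r-1)$ prevote and precommit held by $v$ at time $t_{r,v}$, i.e.\ all of $V_{r-1,v,t_{r,v}}$ and $C_{r-1,v,t_{r,v}}$---to every honest participant by time $t_{r,v}+T\le t_r+2T$. On the other hand, by Lemma~\ref{lem:timings}(ii) no honest voter prevotes before $t_r+2T$. Hence, if $v'$ is an honest prevoter of round $r$ and $t'$ is the time at which it prevotes, then $t'\ge t_r+2T$, so $v'$ has received $B$ from the primary and has $V_{r-1,v,t_{r,v}}\subseteq V_{r-1,v',t'}$ and $C_{r-1,v,t_{r,v}}\subseteq C_{r-1,v',t'}$.

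Next I would determine which chain $v'$ prevotes for. Since $v$ started round $r$, round $r-1$ is completable from $v$'s view at time $t_{r,v}$ (step~1), so Lemma~\ref{lem:message-monotonicity-completed-estimate} applied to the inclusions above gives $E_{r-1,v',t'}\le E_{r-1,v,t_{r,v}}=B$. If $E_{r-1,v',t'}=B$, the default branch of the prevote rule (step~3) makes $v'$ prevote for the best chain containing $E_{r-1,v'}=B$, and the exception clause does not apply since it would require $B>E_{r-1,v'}$. If instead $E_{r-1,v',t'}<B$, then $B>E_{r-1,v'}$; moreover $E_{r-1,v,t_{r,v}}$ lies on $\mathrm{chain}(g(V_{r-1,v,t_{r,v}}))$, so $B\le g(V_{r-1,v,t_{r,v}})$, and by Lemma~\ref{lem:ghost-monotonicity}(ii) (applicable since $g(V_{r-1,v,t_{r,v}})$ is non-nil, being at least the genesis block) we get $g(V_{r-1,v',t'})\ge g(V_{r-1,v,t_{r,v}})\ge B$. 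Thus the exception clause $g(V_{r-1,v'})\ge B>E_{r-1,v'}$ of step~3 is satisfied and $v'$ prevotes for the best chain containing $B$. In either case $v'$'s prevote is for a block $\ge B$.

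Finally, by the previous step the set $H_r$ of all honest prevotes of round $r$ consists of votes for blocks $\ge B$ from the $n-f\ge(n+f+1)/2$ honest voters, so $H_r$ has a supermajority for $B$ and therefore $g(H_r)\ge B$; by Lemma~\ref{lem:honest-prevote-timings}(b) every honest participant finalises $g(H_r)$, hence its ancestor $B$, by time $t_r+6T$. The main obstacle is the middle step: one has to check carefully that every honest prevoter's view at prevote time really does contain both the primary's broadcast and a superset of the primary's round-$(r-1)$ votes, since this is precisely what lets the two monotonicity lemmas squeeze the estimate down to at most $B$ and push the GHOST value up to at least $B$, triggering the primary-following branch of the prevote rule; once that is in hand, the finalisation conclusion is immediate from Lemma~\ref{lem:honest-prevote-timings}.
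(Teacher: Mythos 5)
Your argument follows the same route as the paper's proof: Lemma~\ref{lem:timings}(ii) guarantees no honest prevote happens before $t_r+2T$, by which time the gossip assumption has delivered the primary's broadcast and its round-$(r-1)$ view to everyone; Lemma~\ref{lem:message-monotonicity-completed-estimate} then squeezes every honest $E_{r-1,v'}$ down to at most $B$, so each honest prevoter lands on the best chain containing $B$ either via the default branch or via the primary-override clause of step~3; and Lemma~\ref{lem:honest-prevote-timings}(b) finishes. Your case split on whether $E_{r-1,v'}=B$ or $E_{r-1,v'}<B$, with the check that $B\le g(V_{r-1,v,t_{r,v}})$ and Lemma~\ref{lem:ghost-monotonicity}(ii) to verify the override condition $g(V_{r-1,v'})\ge B>E_{r-1,v'}$, is a more careful spelling-out of what the paper compresses into one sentence, and it is sound. (One small slip: $g(V_{r-1,v,t_{r,v}})$ is not automatically non-nil ``being at least the genesis block''---even the genesis block needs $(n+f+1)/2$ votes to have a supermajority; the correct justification is that $E_{r-1,v,t_{r,v}}$ is well-defined because $v$ has entered round $r$, so round $r-1$ is completable for $v$, which presupposes a non-nil GHOST value.)

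The genuine gap is that you restrict to ``the case where $B$ is not final,'' whereas the lemma (and its later applications in Lemma~\ref{lem:honest-recent-validity} and the liveness lemma) must also cover the case where $v$ has already finalised $B=E_{r-1,v,t_{r,v}}$ and, per step~2, chooses not to broadcast it. In that case your override-clause branch breaks down: an honest $v'$ with $E_{r-1,v'}<B$ receives no block from the primary, so step~3 sends it to the best chain containing $E_{r-1,v'}$, which need not contain $B$. The paper closes this case differently: since $B$ is finalised, Corollary~\ref{cor:overestimate-final} forces $E_{r-1,v'}\ge B$ for every honest $v'$ once round $r-1$ is completable for them, which combined with the upper bound $E_{r-1,v'}\le B$ from Lemma~\ref{lem:message-monotonicity-completed-estimate} gives $E_{r-1,v'}=B$ exactly, and the default branch of step~3 then does the job. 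You should add this sub-case; the rest of your argument stands.
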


\begin{proof} By Lemma \ref{lem:timings} and our network assumptions, no honest voter  prevotes before time $t_r+2T \geq t_{r,v}+2T$ and so at this time, they will have seen all prevotes and precommits seen by $v$ at $t_{r,v}$ and the block $B$ if $v$ broadcast it then. By Lemma \ref{lem:message-monotonicity-completed-estimate}, any honest voter $v'$ has $E_{r-1,v'} \leq B \leq g(V_{r-1,v})$ then.

So if the primary broadcast $B$, then $v'$ prevotes for the best chain including $B$. If the primary did not broadcast $B$, then they finalise it. By Corollary \ref{cor:overestimate-final}, it must be that $E_{r-1,v'} \geq B$ and so $E_{r-1,v'}=B$ and so in this case $v'$ also prevotes for the best chain including $B$.

Since all honest voters prevote $\geq B$, $g(H_r) \geq B$ and so by Lemma \ref{lem:honest-prevote-timings}, all honest participants finalise $B$ by time $t_r+6T$
\end{proof}

\begin{lemma}
 Suppose that $t_r \geq \GST+T$ and the primary of round $r$ is honest. 
Let $B$ be the latest block that is ever finalised in rounds  $<r$ (even if no honest participant finalises it until after $t_r$). If all honest voters for the prevote in round $r$ agree that the best chain containing $B$ include the same child $B'$ of $B$, then they all finalises some child of $B$ before $t_r+6T$.
\end{lemma}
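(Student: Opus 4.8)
The plan is to reduce the statement to the two earlier liveness lemmas about an honest primary, using Corollary~\ref{cor:overestimate-final} to pin down where the primary's broadcast sits relative to the global last‑finalised block $B$. All timing bounds come from Lemma~\ref{lem:timings}, which applies since $t_r \geq \GST+T \geq \GST$, and I use the standing assumption of this section that each vote has at most $f$ Byzantine voters, so all prevote and precommit sets are safe.

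First I would look at the block $\hat B := E_{r-1,v,t_{r,v}}$ that the honest primary $v$ holds when it enters round $r$ (the block it broadcasts unless it has already finalised it). Since $B$ was finalised in some round $r_0 < r$ and the primary, having entered round $r$, sees round $r-1 \geq r_0$ as completable at time $t_{r,v}$, Corollary~\ref{cor:overestimate-final} gives $\hat B \geq B$, so $\hat B$ is a descendant of $B$ or equal to it. Moreover, if $\hat B$ had ever been finalised in a round $< r$ then $\hat B \leq B$ and hence $\hat B = B$; so whenever $\hat B > B$ the primary does broadcast $\hat B$. This splits the argument in two.

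In the case $\hat B > B$, the hypotheses of Lemma~\ref{lem:primary-finalises} hold with its ``$B$'' taken to be $\hat B$, so every honest participant finalises $\hat B$ by time $t_r+6T$; since $\hat B > B$, this also finalises the child of $B$ lying on $\mathrm{chain}(\hat B)$, which is the required child of $B$ — the hypothesis about the agreed child $B'$ is not even needed here. In the case $\hat B = B$, Lemma~\ref{lem:primary-finalises} (now with its ``$B$'' equal to our $B$) tells us that every honest prevoter prevotes for the head of the best chain, as it sees it, containing $B$; this holds whether or not the primary actually broadcasts, since the proof of that lemma already covers the sub‑case where the primary has finalised its estimate. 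Here is where I invoke the hypothesis of the present lemma: for each honest prevoter that best chain contains the common child $B'$, so every honest prevote is for a block $\geq B'$. Writing $H_r$ for the set of prevotes ever cast by honest voters in round $r$, the $n-f$ honest prevotes (at least the supermajority threshold $(n+f+1)/2$, since $n \geq 3f+1$) then give the safe set $H_r$ a supermajority for $B'$, so $g(H_r) \geq B' > B$ by Lemma~\ref{lem:ghost-monotonicity}. Lemma~\ref{lem:honest-prevote-timings}(b) finishes the job: every honest participant finalises $g(H_r)$ by time $t_r+6T$, and since $B' \in \mathrm{chain}(g(H_r))$ this finalises the child $B'$ of $B$.

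I expect the crux to be the case $\hat B = B$. Without the extra hypothesis, Lemma~\ref{lem:primary-finalises} only yields finalisation of $B$ itself, which is no progress, and the content of the lemma is precisely that pinning every honest prevote onto the agreed child $B'$ drives $g(H_r)$ strictly past $B$. The subtlety to handle carefully is that this case genuinely can arise even when no honest participant — the primary included — has yet finalised $B$; so the reasoning must lean only on Corollary~\ref{cor:overestimate-final} (together with Lemma~\ref{lem:message-monotonicity-completed-estimate} for the matching upper bound) forcing the round‑$(r-1)$ estimates of all honest voters, and of the primary, to sit at the value $B$ by the time they act, rather than on $B$ being observably finalised anywhere.
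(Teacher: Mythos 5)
Your proof is correct and follows essentially the same route as the paper's: split on whether the primary's estimate $E_{r-1,v,t_{r,v}}$ is strictly above $B$ (then apply Lemma~\ref{lem:primary-finalises} directly) or equal to $B$ (then use the agreed child $B'$ to force $g(H_r) \geq B'$ and conclude via Lemma~\ref{lem:honest-prevote-timings}(b)). Your write-up is in fact a little more careful than the paper's, which has two small slips that you implicitly correct --- it cites Lemma~\ref{lem:honest-prevote-timings} where Lemma~\ref{lem:primary-finalises} is the one showing all honest prevotes are for best chains containing $B$, and writes $g(H_r) \geq B$ where $g(H_r) \geq B'$ is what is needed.
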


\begin{proof} By Corollary \ref{cor:overestimate-final}, any honest participant sees that $E_{r-1} \geq B$ during round $r$. Let $v$ be the primary of round $r$ and $B''=E_{r-1,v,t_{r,v}}$. If $B'' > B$, then by Lemma \ref{lem:primary-finalises}, all honest participants finalise $B''$ by time $t_r+6T$ which means they finalised a child of $B$. If $B''=B$, then by Lemma \ref{lem:honest-prevote-timings}, all honest voters prevote for the best chain including $B$.
By assumption these chains include $B'$ and so $g(H_r) \geq B$. By Lemma \ref{lem:honest-prevote-timings}, this means that $B'$ is finalised by time $t_r+6T$.
\end{proof}

\subsubsection{Recent Validity}

\begin{lemma} \label{lem:honest-recent-validity}
Suppose that $t_r \geq \GST$, the primary of round $r$ is honest and all votes have at most $f$ Byzantine voters.
Let $B$ be a block that less than $f+1$ honest prevoters in round $r$ saw as being in the best chain of an ancestor of $B$ at the time they prevoted.
Then either all honest participants finalise $B$ before time $t_r+6T$ or no honest participant ever has $g(V_{r,v}) \geq B$ or $E_{r,v} \geq B$.
\end{lemma}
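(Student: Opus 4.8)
The plan is a case analysis on how $B$ sits relative to the block $B^\star=E_{r-1,v_0,t_{r,v_0}}$ that the honest primary $v_0$ of round $r$ broadcasts (equivalently, the block it would broadcast were it not already final). Since $t_r\ge\GST$ and $v_0$ is honest, Lemma~\ref{lem:primary-finalises} supplies the two facts I would lean on throughout: every honest prevoter of round $r$ prevotes for the head of the best chain it observes containing $B^\star$, and every honest participant finalises $B^\star$ by time $t_r+6T$.

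I would first dispatch the case $B^\star\ge B$. Then $B$ is an ancestor of (or equal to) $B^\star$, so finalising $B^\star$ by time $t_r+6T$ transitively finalises $B$ by that time for every honest participant, which is the first disjunct of the conclusion. From now on assume $B^\star\not\ge B$, i.e. $B^\star<B$ or $B^\star\nsim B$; the target is the second disjunct, that no honest participant ever has $g(V_{r,v})\ge B$ or $E_{r,v}\ge B$. Since $E_{r,v}$ always lies on $\mathrm{chain}(g(V_{r,v}))$, we have $E_{r,v}\ge B\Rightarrow g(V_{r,v})\ge B$, so it is enough to rule out $g(V_{r,v})\ge B$ for every honest $v$ at every time.

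The engine of the argument is a counting step. If some honest $v$ ever has $g(V_{r,v})\ge B$, then $V_{r,v}$ has a supermajority for $B$ (since $B\le g(V_{r,v})$ and every vote $\ge g(V_{r,v})$ is a vote $\ge B$), so at least $(n+f+1)/2$ voters prevote $\ge B$ or equivocate; with at most $f$ Byzantine voters and no honest equivocation, at least $(n+f+1)/2-f=(n-f+1)/2\ge f+1$ honest voters must have prevoted $\ge B$. Hence it suffices to show that at most $f$ honest voters prevote $\ge B$ in round $r$. If $B^\star\nsim B$, every honest prevote is $\ge B^\star$ and hence on the same chain as $B^\star$; an honest prevote $\ge B$ would put $B$ and $B^\star$ on a common chain, a contradiction, so no honest voter prevotes $\ge B$. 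If $B^\star<B$, then $B^\star$ is a strict ancestor of $B$ and every honest prevoter extends the best chain it observes containing $B^\star$; an honest prevote $\ge B$ then means that voter saw $B$ on the best chain of the ancestor $B^\star$ of $B$ at the time it prevoted, so it is one of the fewer than $f+1$ prevoters counted in the hypothesis. In both cases at most $f$ honest voters prevote $\ge B$, closing the argument.

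The hard part will be making the case $B^\star<B$ airtight: I would need to check that, with an honest primary, every honest prevoter really does build its round-$r$ prevote on $B^\star$ (not on some other ancestor), and that in this case $B^\star$ is a genuine ancestor of $B$, so that ``prevoted $\ge B$'' is subsumed by the hypothesis's ``saw $B$ in the best chain of an ancestor of $B$'' --- this is precisely the content buried inside the proof of Lemma~\ref{lem:primary-finalises}. The $B^\star\nsim B$ case needs only that two blocks lying below a common block are comparable, and the counting step is routine once $(n+f+1)/2$ is unwound using $n>3f$.
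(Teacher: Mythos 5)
Your proposal is correct and follows essentially the same route as the paper: split on whether the honest primary's broadcast block $B^\star=E_{r-1,v',t_{r,v'}}$ satisfies $B^\star\geq B$ (in which case Lemma~\ref{lem:primary-finalises} finalises $B$ by $t_r+6T$) or not (in which case Lemma~\ref{lem:primary-finalises} caps the honest prevotes $\geq B$ at $f$, and the count $2f<(n+f+1)/2$ kills any supermajority for $B$, hence $g(V_{r,v})\not\geq B$ and a fortiori $E_{r,v}\not\geq B$). Your write-up merely makes explicit the two sub-cases ($B^\star\nsim B$ versus $B^\star<B$) and the reduction from $E_{r,v}$ to $g(V_{r,v})$ that the paper leaves implicit.
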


\begin{proof} Let $v'$ be the primary of round $r$ and let $B'=E_{r-1,v',t_{r,v'}}$. If $B' \geq B$, then by Lemma \ref{lem:primary-finalises}, all honest participants finalise $B$ by time $t_r+6T$. If $B' \not\geq B$, then by Lemma \ref{lem:primary-finalises}, at most $f$ honest voters prevotes $\geq B$. In this case, less than $2f+1 \leq (n+f+1)/2$ prevoters vote $\geq B$ or equivocate and so no honest participant ever has $g(V_{r,v}) \geq B$.
\end{proof}

\begin{corollary} For $t - 6T > t' \geq \GST$, suppose that an honest participant finalises $B$ at time $t$ but that no honest voter has seen $B$ as in the best chain containing some ancestor of $B$ in between times $t'$ and $t$, then at least $(t-t')/6T - 1$ rounds in a row had Byzantine primaries. \end{corollary}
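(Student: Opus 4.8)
The plan is to derive this as a direct consequence of Lemma~\ref{lem:honest-recent-validity}, applied round by round to the block $B$. First I would count how many rounds could have started between time $t'$ and time $t$. By Lemma~\ref{lem:timings}~(iv), once $t_r \geq \GST$ we have $t_{r+1} \leq t_r + 6T$, so honest participants enter at most one new round per interval of length $6T$; hence in the window $[t', t]$ of length $t - t'$ there can be at most $(t-t')/6T$ rounds whose start time lies in that window, and since the last finalisation happens strictly after the window plus a $6T$ margin (because $t - 6T > t'$), there is genuine room for the argument. More carefully, I want to produce a specific count: let $r_0$ be the first round with $t_{r_0} \geq t'$ and $r_1$ the last round with $t_{r_1} + 6T \leq t$; then $r_1 - r_0 \geq (t - t')/6T - 1$ by the inductive bound on round durations, using $t' \geq \GST$ so that the timing lemmas apply to all these rounds.

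Next I would argue that every one of these rounds $r_0, \dots, r_1$ must have had a Byzantine primary. Suppose some round $r$ in this range had an honest primary. Since $t_r \geq t' \geq \GST$ and all votes have at most $f$ Byzantine voters, Lemma~\ref{lem:honest-recent-validity} applies to $B$: either all honest participants finalise $B$ before time $t_r + 6T \leq t$, or no honest voter ever has $g(V_{r,v}) \geq B$ or $E_{r,v} \geq B$. The first alternative contradicts the hypothesis that the honest participant only finalises $B$ at time $t$ (and that no honest voter saw $B$ in the relevant best chain between $t'$ and $t$) — wait, I need to be careful: finalising $B$ before $t_r + 6T$ would be a finalisation at a time $\leq t$, which is consistent with "finalises $B$ at time $t$" only if that bound is not strict; so I instead lean on the validity hypothesis. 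To finalise $B$ an honest participant needs a supermajority of precommits for blocks $\geq B$, hence some honest voter with $g(V_{r',v}) \geq B$ or $E_{r',v} \geq B$ in the finalising round $r' < r$... this needs the machinery connecting finalisation to the estimate, essentially Corollary~\ref{cor:overestimate-final}. The cleanest route: for the honest-primary round $r$, the hypothesis of Lemma~\ref{lem:honest-recent-validity} is met because no honest voter saw $B$ in the best chain of an ancestor between $t'$ and $t$, and in particular not at prevote time in round $r$ (which lies in $[t', t]$), so fewer than $f+1$ honest prevoters — indeed zero — saw it; thus the second alternative must hold, i.e. no honest voter ever has $E_{r,v} \geq B$. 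But then by Corollary~\ref{cor:overestimate-final}, $B$ cannot have been finalised in round $r$ or any earlier round, and by the same estimate-monotonicity across rounds (Lemma~\ref{lem:message-monotonicity-completed-estimate} together with Corollary~\ref{cor:overestimate-final} propagated forward), $B$ is never finalised at all — contradicting that it is finalised at time $t$.

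The main obstacle I anticipate is making the bridge between "an honest participant finalises $B$ at time $t$" and "some honest voter had $g(V_{r,v}) \geq B$ or $E_{r,v} \geq B$ at some point," cleanly enough that the contrapositive of Lemma~\ref{lem:honest-recent-validity} bites. Finalisation of $B$ requires a commit message with a supermajority of precommits for blocks $\geq B$; with at most $f$ Byzantine voters, at least one honest voter precommitted $\geq B$, and by the reasoning in Lemma~\ref{lem:honest-answer} an honest precommit in round $r'$ for $B'' \geq B$ forces $g(V_{r',v}) \geq B$ at the time of that precommit. So the round $r'$ in which $B$ is committed has $t_{r'}$ either before $\GST$ or after; if $r' \geq r_0$ then $t_{r'} \geq t' \geq \GST$ and this honest voter saw $g(V_{r',v}) \geq B$ during a round in our window — but this is about $g(V)$, not about seeing $B$ in a best chain, so it does not immediately contradict the validity hypothesis, only the "no honest voter ever has $g(V_{r,v}) \geq B$" clause of Lemma~\ref{lem:honest-recent-validity} for round $r'$ itself; combined with the fact that the primary of $r'$ need not be honest, I would instead run the contradiction through the honest-primary round $r \in [r_0, r_1]$ as above and use forward-monotonicity of the estimate (via Corollary~\ref{cor:overestimate-final}) to conclude $E_{r,v} \geq B$ would have to hold once $B$ is finalised in round $r' \leq r$ — wait, we need $r' < r$ or $r' \le r$; since the honest-primary round $r$ satisfies $r \geq r_0$ and $B$ is finalised at time $t > t_r$, the finalising round $r'$ could be before or after $r$. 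If $r' \leq r$, Corollary~\ref{cor:overestimate-final} gives $E_{r,v} \geq B$ for honest $v$ seeing $r$ completable, contradicting the second clause. If $r' > r$, then $r' > r_1$ is impossible since $r \leq r_1 < r'$ would push $t_{r'} > t$; so in fact $r' \leq r_1$, and picking $r = r_1$ (or the largest honest-primary round) handles it. Tightening this case split — and confirming that at least one round in $[r_0, r_1]$ with an honest primary would already yield the contradiction, so that all of them being Byzantine is forced — is the delicate part; the counting in the first paragraph is then routine.
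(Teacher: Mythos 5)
Your proposal is correct and follows the route the paper intends (the corollary is stated without proof, but it is plainly meant to be read off from Lemma~\ref{lem:honest-recent-validity} together with the round-timing bound of Lemma~\ref{lem:timings}(iv)): count the rounds fully contained in $[t',t]$, note that each one with an honest primary would trigger Lemma~\ref{lem:honest-recent-validity} with zero (hence fewer than $f+1$) honest prevoters seeing $B$ in the best chain of an ancestor, and rule out both alternatives of that lemma using the finalisation of $B$ at time $t$. One caveat: your parenthetical claim that the finalising round $r'$ cannot exceed $r_1$ is not justified as stated --- a round can begin within $6T$ of $t$ and still produce the finalising commit by time $t$ --- but this does not matter, because the forward-propagation argument you also sketch (no honest prevote cast in $[t',t]$ in any round $\geq r$ can be $\geq B$ under the validity hypothesis, so $g(V_{r'',v}) \not\geq B$ and hence no finalisation of $B$ in any such later round either) closes that case, and Corollary~\ref{cor:overestimate-final} closes the case $r' \leq r$.
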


\bibliography{grandpa} 

\end{document}